\documentclass[10pt,journal,twoside,web]{ieeecolor}
\usepackage{generic}
\usepackage{cite}
\usepackage{amsmath,amssymb,amsfonts}
\usepackage{algorithmic}
\usepackage{graphicx}
\usepackage{textcomp}
\usepackage[export]{adjustbox}
\usepackage{balance} % for last page
\usepackage{stfloats} % for two-column equations to appear at bottom
\usepackage{mathtools} % for \coloneqq
\usepackage{bbm,dsfont} % for double-struck 1
\usepackage{mathrsfs} % for mathscr font
\usepackage{url}

\def\BibTeX{{\rm B\kern-.05em{\sc i\kern-.025em b}\kern-.08em
    T\kern-.1667em\lower.7ex\hbox{E}\kern-.125emX}}
\makeatletter
\let\NAT@parse\undefined
\makeatother
\usepackage[hidelinks]{hyperref}

\DeclareMathOperator{\dist}{dist}
\DeclareMathOperator{\col}{col}

\newcommand{\vect}[1]{\mathbf{#1}}
\newcommand{\matr}[1]{\mathbf{#1}}
\newcommand{\tran}{\mathsf{T}}
\newcommand\eqm[1]{\bar{#1}}

\renewcommand{\Re}{\operatorname{Re}}
\renewcommand{\Im}{\operatorname{Im}}

\newtheorem{corollary}{Corollary}
\newtheorem{lemma}{Lemma}
\newtheorem{definition}{Definition}
\newtheorem{remark}{Remark}
\newtheorem{proposition}{Proposition}
\newtheorem{assumption}{Assumption}

\begin{document}
\title{Shifted Dissipativity and Rotor-Angle Feedback for Orbital Stability of Power Networks}
\author{Xinyuan Jiang, \IEEEmembership{Member, IEEE}
%\thanks{This paragraph of the first footnote will contain the date on which you submitted your paper for review. }
\thanks{X. Jiang is an independent researcher (e-mail: \url{j_jxy@outlook.com}).
}
%\thanks{X. Jiang is an independent researcher (e-mail: j\_jxy@outlook.com).}
}

\maketitle
\thispagestyle{empty}

\begin{abstract}
This paper considers a shifted dissipativity property of synchronous generators (SGs) with the aim of establishing stability of the synchronous orbit of a power network. Existing shifted-passivity analyses are formulated in a reference frame attached to the rotor and therefore suppress the problem of rotor-angle alignment in a multimachine network. The analysis takes two steps to reconcile angle-modulated power sources with dissipation. First, we define a weaker shifted dissipativity property whose supply rate retains a cross term between rotor-angle and terminal-voltage errors. We show that an SG connected to an infinite bus is locally asymptotically stable if it satisfies this property. Second, for networks of multiple SGs interconnected through dynamic transmission lines and static loads, we introduce a consensus-like rotor-angle feedback that consolidates the contributions of the distributed angle-modulated power sources into a single effective channel. This channel is then dominated by the dissipative dynamics of an auxiliary state variable representing the phase of the synchronous orbit, thereby establishing local orbital stability. Numerical examples demonstrate the feasibility of the shifted dissipativity condition.
\end{abstract}

\begin{IEEEkeywords}
Dissipativity, Lyapunov stability, orbital stability, frequency synchronization, power system stability, synchronous generators
\end{IEEEkeywords}

\section{Introduction}
\label{sec:introduction}

\IEEEPARstart{T}{his} paper aims to define and substantiate a shifted dissipativity property for SGs that guarantees the orbital stability of a power network in a compositional manner.
The standard definition of dissipativity characterizes an input--output system in terms of energy dissipation with respect to a generalized supply rate. It can be used to determine whether a system will be stable when connected to a steady external power source. This applies, e.g., to DC power systems. However, in AC power systems, the orientations of the power sources are determined by the SG rotor angles, which are not synchronized with one another a priori. In an SG, the power source is the input mechanical torque, which is steady in the $dq$ reference frame that rotates with the rotor but not in a stationary reference frame shared by every component of the power system. The instantaneous power that the SG exports depends greatly on the rotor angle. Therefore, the stability of an AC system crucially depends on whether the rotor frequencies converge to a common value. The standard definition of dissipativity provides no obvious mechanism for accounting for this angle behavior. To establish dissipativity when the power transfer through a port is nonzero, the power sources must typically be hidden or separated from the port variables. Misaligned rotor angles, however, reveal that an SG is inherently a power source.

The literature reflects a dilemma between dissipativity and angle alignment. In this review, we also include literature on inverter-based resources (IBRs) because the same principles apply. In a series of works on the multivariable cell structure~\cite{barabanov2017conditions,schiffer2019global}, Schiffer and colleagues observed that, under the assumption that the storage function of the overall power network is positive definite, the channels of rotor-angle differences can have no associated dissipation. This prompted the development of the Leonov function approach to prove stability of systems with a periodic steady state. The same observation applies to each individual SG: the rotor angle can have no dissipation associated with it. The problem is that, since the rotor angle is tied to the orientation of the power source, dissipation in other channels, whether from the control loops or the parasitic losses, cannot dominate the power source. Ref.~\cite{spanias2018system} provides a counterexample to this approach: the authors consider shifted passivity with respect to a reference trajectory in the stationary reference frame. However, Remark~\ref{lem_supply} proves that an SG cannot satisfy this property, even before the absence of angle dissipation is considered.

Another group of studies abandons frequency synchronization and the issue of angle alignment~\cite{riverso2014plug,6975228,tucci2020scalable,strehle2021unified}. Instead, they control the angles of the SGs to match a given set of references, thus turning the problem into a classical nonlinear output regulation problem. In IBR control, this approach is called angle droop. However, this approach is difficult to implement beyond simulations. The practical implementation requires every generator to maintain a synchronized clock~\cite{kolluri2018stability}. While this may be readily achieved using GPS, the approach also requires precise knowledge of the steady state to avoid issues such as integrator wind-up. The latter can pose serious challenges if the steady state is always changing due to external disturbances. Moreover, with this approach, adjusting each generator's power output to meet the power demands relies solely on prescribed angle references, which in turn depend on a communication network with minimal delay. However, in IBR control, there is a particular emphasis on power sharing through the frequency-synchronization mechanism, which requires the IBRs to adjust their output voltage frequencies based on the instantaneous power measurements rather than on commands issued by a centralized controller over a communication network.

In view of these difficulties, contemporary works tend to establish dissipativity only for two simplified setups~\cite{natarajan2014almost,monshizadeh2019conditions,Chen2024Unified}. The first setup is a single SG connected to an infinite bus (SMIB). Compared with that of a multimachine power network, the steady state of the SMIB system is a relative equilibrium, i.e., a steady state that can be transformed into an equilibrium with a possibly time-dependent coordinate transformation. The second setup is to prove dissipativity in the $dq$ reference frame attached to the rotor and to regard the interaction with everything else in the power network as a disturbance. This setup not only reduces the steady state to an equilibrium but also reduces the power source to a constant bias in the equations~\cite{monshizadeh2019conditions}. Indeed, much of the theory surrounding SG dynamics focuses on the dynamics in the $dq$ reference frame, with examples such as the one- and two-axis models~\cite{anderson2004power}.
For example, 
the analysis in~\cite{caliskan2014compositional} incorrectly assumed that dissipativity in the $dq$ reference frame extended to a stationary reference frame. Since the issue of angle alignment was neglected, direct angle control was subsequently required in the correction~\cite{caliskan2016correction}.

To the best of our knowledge, only a limited number of works study the stationary-frame SG dynamics, which constitutes a significant gap in power system stability analysis. The traditional way to bridge this gap is to reduce the transmission-line dynamics to a static power flow model. With the assumption of zero line resistance, the model can be reduced so that its state variables consist only of rotor frequencies and angles. This assumption suppresses the distinction between absolute-angle dynamics and relative-angle dynamics. 
The standard energy function method and its variants rely on this assumption to find a local storage function~\cite{schiffer2014conditions,schiffer2019global,de2018bregman,nishino2025necessary}. 
However, constructing an analogous local storage function for a model that includes resistive, dynamic transmission lines appears disproportionately difficult~\cite{gross2019effect,gao2025explicit}.

In this paper, we build on the shifted passivity concept in~\cite{monshizadeh2019conditions} and propose a shifted dissipativity condition for SGs that addresses the issue of angle alignment. The proposed supply rate retains a cross term between the rotor angle and the terminal voltage. This makes the dissipativity condition feasible because the cross term accounts for the power-source contribution caused by rotor-angle misalignment. 

As the first result, we show that satisfying this condition implies that an SG is locally asymptotically stable in the SMIB setup, and we numerically validate the feasibility of this condition with realistic SG parameters.

As the second result, we prove that when the shifted dissipativity condition is satisfied for each SG in a power network with dissipative dynamic transmission lines, we can apply a consensus-like rotor-angle feedback to stabilize the periodic orbit of the power network. Compared with the single-machine result, the network result involves a periodic orbit that cannot be reduced to an equilibrium point without losing dynamical features that are important for the frequency synchronization mechanism as well as for power sharing. 
Existing energy-function methods suppress these dynamical features because their simplified models retain static diffusive coupling instead of dynamic, dissipative transmission-line dynamics.

Our approach to resolving the dilemma between angle alignment and dissipation is to introduce an auxiliary state variable to represent the phase of the reference trajectory. We endow this auxiliary state variable with negative-feedback dynamics, thereby creating an additional channel of dissipation. The proposed rotor-angle feedback, introduced as a consensus-like current injection at each SG terminal, consolidates the distributed power sources associated with the individual SGs into a single channel that can be dominated by the dissipation associated with the auxiliary state variable, thereby enabling us to prove convergence of the extended state vector to the extended periodic orbit.

The rest of the paper is organized as follows. Section~\ref{sec_set_up} introduces notation and defines the synchronous steady state. Section~\ref{sec_SG} introduces the synchronous generator and investigates its shifted dissipativity property. Section~\ref{sec_grid} completes the remaining parts of the power system model and proves the network stability implications of the shifted dissipativity conditions. Numerical examples are presented in Section~\ref{sec_numerical} to verify the shifted dissipativity conditions. 
Section~\ref{sec_concl} concludes the paper.
\section{Problem Setup} \label{sec_set_up}

\subsection{Notation}
We use $\mathbb{T} = \mathbb{R}/(2\pi \mathbb{Z})$ for the one-dimensional torus, $\mathbb{S}^1 = \{e^{j\theta} \mid \theta \in \mathbb{T}\}$ for the unit circle in $\mathbb{C}$, where $j $ is the imaginary unit.
We use $\mathbb{R}_{\geq 0}$ for the set of non-negative real numbers, $\Re\{\cdot\}$ and $\Im\{\cdot\}$ for the real and imaginary parts, and $\mathbf{L}^2_{\mathrm{loc}}$ for locally square integrable functions.

For a set of vectors $\vect x_i,\, i = 1,\ldots, n$, denote by $\col(\vect x_i)_{i=1}^n$ the vector that stacks $\vect x_i$. The vector of $n$ ones is $\mathbbm{1}_n$; the zero vector of size $n$ is $\mathbf{0}_n$; the identity matrix of size $n$ is $\matr I_n$; the zero matrix of size $n\times m$ is $\matr 0_{n\times m}$. 
Denote by $(\cdot)^*$ the Hermitian transpose operator, and $(\cdot)^\tran$ the transpose.

For complex vectors, \(\langle \vect u, \vect v\rangle\coloneqq\Re\{\vect u^* \vect v\}\)
denotes the real inner product. The gradient of a real-valued continuously differentiable function
on a complex vector space is understood with respect to this real inner
product; equivalently, for $f: \mathbb{C}^n \to \mathbb{R}$, 
$\nabla f(\vect x) = 2 \begin{bmatrix}
    \frac{\partial f(\vect x)}{\partial x_1^*} &\cdots &\frac{\partial f(\vect x)}{\partial x_n^*}
\end{bmatrix}^\tran$,
where each $\frac{\partial f(\vect x)}{\partial x_i^*}$ is the Wirtinger derivative~\cite[Section~1.4]{remmert1991theory}; the factor $2$ originates from the convention for Wirtinger derivatives. It holds that
$
\frac{\partial f}{\partial \vect x}(\vect x) \vect v = \langle \nabla f(\vect x), \vect v \rangle
$.
We use $\matr M \succ 0$ to denote that all eigenvalues of an implicitly Hermitian matrix $\matr M = \matr M^* \in \mathbb{C}^{n\times n}$ are positive.

Let $\mathcal M \subset \mathbb{C}^n$ be a Riemannian manifold~\cite[Chapter~3]{bullo2005geometric} embedded in $\mathbb{C}^n$.
For any set $U \subset \mathcal M$ and $r > 0$, we denote by
$B_{\mathcal M}(U, r) \coloneqq \big\{ \vect x \in \mathcal M \mid \dist_{\mathcal M}(\vect x, U) < r \big\}$
the open $r$-neighborhood of $U$ relative to $\mathcal M$.
The big O notation $f(\vect x) = O\!\left(\|\vect x\|^k\right)$ with $k\in \mathbb{N}$ means that there exist $c > 0$ and $\delta > 0$ such that $|f(\vect x)| < c\|\vect x\|^k$ for all $\vect x \in B_{\mathcal M}(\vect 0_n, \delta)$.

\subsection{Stability of a Synchronous Orbit}

Assuming constant input disturbances (e.g., load and generation setpoints), we abstract the multimachine power system as an autonomous ODE system 
\begin{equation} \label{E:system}
    \dot{\vect x} = F(\vect x)
\end{equation}
where $F : \mathbb{X} \to T\mathbb{X}$ is a smooth vector field on a Riemannian manifold $\mathbb{X} \subset \mathbb{C}^n$~\cite[Chapter~3]{bullo2005geometric}. The abstract model \eqref{E:system} is suitable only if the algebraic constraints, e.g., Kirchhoff Voltage Law (KVL) and Kirchhoff Current Law (KCL), are solvable. Later in Section~\ref{sec_grid}, we will explicitly build a relatively general power system model for which KCL and KVL are indeed solvable.
We denote by $\Phi_F(t, \vect x_0)$ the flow map of \eqref{E:system}, which advances each initial condition $\vect x_0 \in \mathbb{X}$ by a given time $t\geq 0$ along the trajectory.

We assume that \eqref{E:system} admits a single-frequency sinusoidal solution $\eqm{\vect x}(t)$ with period $T > 0$ so that $\eqm{\vect x}(t) = \eqm{\vect x}(t + T)$ for all $ t \geq 0$. The angular frequency is $\eqm\omega = 2\pi/T > 0$.\footnote{A single-frequency sinusoidal solution of a power system is also known as a synchronous steady state~\cite{gross2018steady}. See recent results on the existence problem in~\cite{lorenzetti2021equilibrium,dvijotham2015differential}.} 
Without this assumption, essentially all subsequent analyses break. In this paper, we address the stability of the dynamics, and regard the existence of synchronous steady states as an independent algebraic problem for future work.

\begin{definition}
A \emph{synchronous steady state} is a residual motion, as determined by the dynamics of the power system, where all voltages and currents, in either $abc$ or $\alpha\beta0$ coordinates, rotate at the same frequency $\eqm\omega$~\cite[Subsection~2.3]{gross2018steady}.
\end{definition}

\begin{assumption} \label{assum_exist}
Assume the power system \eqref{E:system} that is being studied has a synchronous steady state.
\end{assumption}

Define the orbit associated with the periodic solution $\eqm{\vect x}(t)$ as 
\begin{equation*}
\Gamma \coloneqq \{\eqm{\vect x}(t)\mid t \in [0, T] \}.
\end{equation*}
The set $\Gamma \subset \mathbb{X}$ is a submanifold that is diffeomorphic to $\mathbb{T}$. We parameterize each point in $\Gamma$ as
\begin{equation} \label{E:parameter}
    \eqm{\vect x}[\varphi] \coloneqq \eqm{\vect x}(\varphi/\eqm\omega),\quad \forall \varphi \in \mathbb{T}.
\end{equation} 
Based on the periodic-orbit parameterization \eqref{E:parameter}, the original periodic solution $\eqm{\vect x}(t)$ is expressed as $\eqm{\vect x}(t) = \eqm{\vect x}[\varphi(t)]$
with
\begin{equation*}
    \varphi(t) = \eqm\omega t.
\end{equation*}
Any periodic solution on $\Gamma$ with a different initial condition can be expressed as $\eqm{\vect x}(t) = \eqm{\vect x}[\varphi(t)]$
with
\begin{equation*}
    \varphi(t) = \eqm\omega t + \varphi_0,\quad \varphi_0\in \mathbb{T}.
\end{equation*}
Therefore, using the periodic-orbit parameterization \eqref{E:parameter}, we obtain a bijective mapping from the residual motion on $\Gamma$ to the following one-dimensional dynamics on $\mathbb{T}$~\cite{wieland2010,isidori2014robust}:
\begin{equation} \label{E:residual}
    \dot\varphi(t) = \eqm\omega,\quad \varphi(0) = \varphi_0 \in \mathbb{T}.
\end{equation}
Our objective in this paper is to find a sufficient condition for the orbit $\Gamma$ to be attractive because the asymptotic phase does not matter; see \cite{manchester2014transverse} for more background.
The precise definition of orbital attraction is given below.

\begin{definition}[Definition~2.12.2 of~\cite{bhatia2006dynamical}] \label{defn_stability1}
A closed set $\Gamma \subset \mathbb{X}$ is said to be \emph{uniformly stable} if, for each $\varepsilon > 0$, there exists $\delta = \delta(\varepsilon) > 0$ such that
\begin{equation*}
    \Phi_F\!\left(\mathbb{R}_{\geq 0}, B_{\mathbb{X}}(\Gamma, \delta)\right) \subset B_{\mathbb{X}}(\Gamma, \varepsilon).
\end{equation*}
\end{definition}
\vspace{\topsep}

\begin{definition}[Definition~2.12.12 of~\cite{bhatia2006dynamical}] \label{defn_stability2}
A closed set $\Gamma \subset \mathbb{X}$ is an \emph{attractor} if there exists $\delta > 0$ such that for each $\vect y \in B_{\mathbb{X}}(\Gamma, \delta)$, $\dist_{\mathbb{X}}(\Phi_F(t, \vect y), \Gamma) \to 0$ as $t\to\infty$. \\
It is said to be \emph{asymptotically stable} if it is uniformly stable and is an attractor.
\end{definition}

\section{Shifted Dissipativity of Synchronous Generators} \label{sec_SG}

\subsection{Dynamical Model}

As part of the synchronous generator (SG) model that we consider in this paper~\cite{levron2018tutorial}, the second-order swing equation for the rotor dynamics is written in state-space form as\footnote{The subscript $\texttt e$ is the index for the SG. In the first half of this paper, we omit the subscript $\texttt e$ for the individual state variables to ease notation.}
\begin{subequations} \label{E:swing} \begin{align}
    J \dot \omega &= -D \omega + T_m + \left\langle I, j\lambda e^{j\theta}\right\rangle, \label{E:swing1} \\
    \dot\theta &= \omega,
\end{align} \end{subequations}
where $J> 0$ is the moment of inertia, $D > 0$ is the viscous damping constant, $T_m > 0$ is the mechanical torque, $\lambda > 0$ is the magnetic flux linkage, $\omega \in \mathbb{R}$ is the electrical rotor frequency,\footnote{That is, the mechanical rotor frequency multiplied by the number of pole pairs.} $\theta \in \mathbb{T}$ is the electrical rotor angle, 
and $I \in \mathbb{C}$ is the space vector for the stator current.\footnote{Current that flows \emph{into} the machine is \emph{positive}. This is consistent with the pH picture in~\cite{fiaz2013port,van2014port} where each port absorbs power under the adopted sign convention when the inner product of its input and output is positive.}\footnote{The real and imaginary parts of $I$ are the $\alpha$- and $\beta$-components of the $\alpha\beta0$ transform of the $abc$ stator current, respectively~\cite{o2019geometric}.} 

\begin{assumption}
Assume that the rotor construction is rotationally symmetric and that the three phases of the stator windings are invariant under permutations.
\end{assumption}

The stator-circuit equation is written as
\begin{equation} \label{E:stator}
    L \dot I = -R I - \lambda j \omega e^{j\theta} + V,
\end{equation}
where $L > 0$ is the stator self-inductance, $R> 0$ is the stator resistance, and $V \in \mathbb{C}$ is the space vector for the terminal voltage.

\begin{assumption}
Assume that the magnetic flux linkage $\lambda > 0$ and the mechanical torque $T_m > 0$ are positive constants. 
\end{assumption}

\begin{remark} 
If all rotor states are observable, then a constant $\lambda$ can be implemented with a special excitation voltage function; see the controller just above Assumption~1 in~\cite{caliskan2014compositional}.
Compared with the full-order SG model, we have essentially omitted the $\lambda$ dynamics, which include the dynamics of the excitation circuits and the damper windings. Thus, the SG model \eqref{E:swing} and \eqref{E:stator} represents an idealized case with infinite-bandwidth control of the flux linkage $\lambda$.

To simplify notation, we embed the rotor angle $\theta$ into the complex variable $\xi \coloneqq e^{j\theta}$, which maps from the $1$-torus $\mathbb{T}$ to the unit circle $\mathbb{S}^1$. We choose the state vector for the SG to be
\begin{equation*}
    \vect x_{\texttt e} \coloneqq \begin{bmatrix}
        j \omega \xi &\xi &I
    \end{bmatrix}^\tran, \quad \texttt e= 1,\ldots , \texttt g,
\end{equation*}
which is a subset of the full power-system state vector $\vect x$ for \eqref{E:system}.
The SG state manifold is
\begin{equation*}
    \mathbb{X}_{\texttt e} \coloneqq \left\{\begin{bmatrix}
        x_1 &x_2 &x_3
    \end{bmatrix}^\tran \in \mathbb{C}\times \mathbb{S}^1\times \mathbb{C} \mid x_1/(jx_2) \in \mathbb{R} \right\}.
\end{equation*}
We regard $\vect u_{\texttt e} = V$ as the input to the SG, and $\vect y_{\texttt e} = I$ as the output. These variables define the port through which the SG subsystem
interacts with the remainder of the power system \eqref{E:system}, and both are
components of the full power-system state \(\vect x\).
\end{remark}

\begin{remark} \label{prop_power}
If the real and reactive power outputs through the SG port are positive in the synchronous steady state, i.e., 
$\eqm P \coloneqq \Re\!\left\{\eqm I[0]^* \eqm V[0] \right\} <0$ and $\eqm Q \coloneqq \Im\!\left\{\eqm I[0]^* \eqm V[0]\right\} < 0$, then $\left\langle \eqm I[0], j \eqm\xi[0] \right\rangle < 0$ and $\left\langle \eqm I[0], \eqm\xi[0] \right\rangle < 0$.
\end{remark}
%\vspace{4pt}

Due to the space limitation, the proof is omitted.

Define the constant
\begin{equation*}
    \mathcal K \coloneqq \lambda \left\langle \eqm I[0], \eqm\xi[0] \right\rangle.
\end{equation*}
As we will see later, the stability results in this paper apply to operating conditions where $\mathcal K > 0$, hence the SG absorbs steady-state reactive power. The reactive power demands are assumed to be supplied by capacitor banks, either physical or simulated by power electronics.

\subsection{Definition of Shifted Dissipativity}
 
Passivity in the standard guise is infeasible for the SG 
because of the steady-state power export through its port~\cite{monshizadeh2019conditions,van2000l2}; see Remark~\ref{lem_supply}. Instead, we are concerned with dissipativity with respect to the supply rate
\begin{align} \label{E:supply}
    s(\vect u_{\texttt e}, \vect y_{\texttt e}, \varphi) &\coloneqq \frac{\lambda}{L} \left\langle \xi - \eqm\xi[\varphi], \vect u_{\texttt e} - \eqm{\vect u}_{\texttt e}[\varphi] \right\rangle \notag \\
    &\:\quad\, + \left\langle \vect y_{\texttt e} - \eqm{\vect y}_{\texttt e}[\varphi] , \vect u_{\texttt e} - \eqm{\vect u}_{\texttt e}[\varphi]\right\rangle.
\end{align} 

\begin{definition} \label{defn_1}
The SG system \eqref{E:swing} and \eqref{E:stator} is said to be locally shifted dissipative with respect to the supply rate $s(\vect u_{\texttt e}, \vect y_{\texttt e}, \varphi)$ in \eqref{E:supply} if there exists a neighborhood $U \subset \mathbb{X}_{\texttt e}\times \mathbb{T}$ of the extended orbit 
$\Gamma_{\mathrm{ext}} = \left\{(\eqm{\vect x}_{\texttt e}[\varphi], \varphi) \mid \varphi \in \mathbb{T} \right\}$
and a nonnegative function $b: U \to \mathbb{R}_{\geq 0}$ such that, along the flow of the extended system consisting of \eqref{E:swing}, \eqref{E:stator}, and some given dynamics for $\varphi$,
\begin{align} 
    &\int_0^{t} s(\vect u_{\texttt e}(\tau), \vect y_{\texttt e}(\tau), \varphi(\tau))\, d \tau \geq -b(\vect x_{\texttt e}(0), \varphi(0)), \notag \\
    &\forall (\vect u_{\texttt e}(\cdot) - \eqm{\vect u}_{\texttt e}[\varphi(\cdot)]) \in \mathbf{L}^{2}_{\mathrm{loc}}(\mathbb{R}_{\geq 0}; \mathbb{C}) 
    \;\; \text{and}\;\; (\vect x_{\texttt e}(0), \varphi(0)) \in U \notag \\
    &\quad\text{such that} \;\; (\vect x_{\texttt e}(\tau), \varphi(\tau)) \in U,\, \tau \in [0, t]. \label{E:shifted_passivity}
\end{align}
\end{definition}
\vspace{\topsep}

Definition~\ref{defn_1} specifies the required asymptotic behavior for the shifted input and output. 
For the subsequent analysis, we will use the following equivalent definition.

\begin{definition} \label{defn_2}
The SG system \eqref{E:swing} and \eqref{E:stator} is said to be 
locally shifted dissipative with respect to the supply rate $s(\vect u_{\texttt e}, \vect y_{\texttt e}, \varphi)$ in \eqref{E:supply} if there exists a neighborhood $U \subset \mathbb{X}_{\texttt e}\times \mathbb{T}$ of the extended orbit 
$\Gamma_{\mathrm{ext}} = \left\{(\eqm{\vect x}_{\texttt e}[\varphi], \varphi) \mid \varphi \in \mathbb{T} \right\}$
and a lower-bounded function $S:U \to \mathbb{R}$ such that, along the flow of the extended system consisting of \eqref{E:swing}, \eqref{E:stator}, and some given dynamics for $\varphi$,
    \begin{align} 
        &S(\vect x_{\texttt e}(t), \varphi(t)) \leq S(\vect x_{\texttt e}(0), \varphi(0)) \notag \\
        &\qquad\qquad\qquad\quad  + \int_0^t s(\vect u_{\texttt e}(\tau), \vect y_{\texttt e}(\tau), \varphi(\tau))\, d \tau, \notag \\
        &\forall (\vect u_{\texttt e}(\cdot) - \eqm{\vect u}_{\texttt e}[\varphi(\cdot)]) \in \mathbf{L}^{2}_\mathrm{loc}(\mathbb{R}_{\geq 0}; \mathbb{C}) 
        \;\; \text{and}\;\; (\vect x_{\texttt e}(0), \varphi(0)) \in U \notag \\
        &\quad \text{such that} \;\; (\vect x_{\texttt e}(\tau), \varphi(\tau)) \in U,\, \tau \in [0, t]. \label{E:shifted_passivity2}
    \end{align}
\end{definition}
\vspace{\topsep}

Note that Definition~\ref{defn_2} is quite general---it does not require the storage function $S(\vect x_{\texttt e}, \varphi)$ to be differentiable.

\begin{lemma} \label{lem_equiv}
Definitions \ref{defn_1} and \ref{defn_2} are equivalent.
\end{lemma}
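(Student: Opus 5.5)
The two implications are proved separately. The direction from Definition~\ref{defn_2} to Definition~\ref{defn_1} is immediate. The converse is the classical available-storage construction of Willems, relativized to the neighborhood $U$ and to the extended flow with the given $\varphi$-dynamics.

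\emph{Definition~\ref{defn_2} $\Rightarrow$ Definition~\ref{defn_1}.} Let $S$ be lower bounded on $U$, say $S\geq -c$ with $c\geq 0$. For any admissible input and initial condition $(\vect x_{\texttt e}(0),\varphi(0))\in U$ whose trajectory stays in $U$ on $[0,t]$, rearranging \eqref{E:shifted_passivity2} gives
\begin{equation*}
\int_0^t s\,d\tau \geq S(\vect x_{\texttt e}(t),\varphi(t)) - S(\vect x_{\texttt e}(0),\varphi(0)) \geq -c - S(\vect x_{\texttt e}(0),\varphi(0)).
\end{equation*}
We therefore set $b(\vect z) \coloneqq \max\{0,\, c + S(\vect z)\}$ for $\vect z\in U$. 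This function is nonnegative and satisfies \eqref{E:shifted_passivity}, since $c+S(\vect z)\geq 0$ anyway.

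\emph{Definition~\ref{defn_1} $\Rightarrow$ Definition~\ref{defn_2}.} Define the local available storage
\begin{equation*}
S_a(\vect z) \coloneqq \sup \Big\{ -\int_0^t s(\vect u_{\texttt e}(\tau),\vect y_{\texttt e}(\tau),\varphi(\tau))\,d\tau \Big\},
\end{equation*}
where the supremum ranges over all $t\geq 0$ and all admissible inputs such that the trajectory starting from $\vect z\in U$ remains in $U$ on $[0,t]$. The properties of $S_a$ follow in three steps.
\begin{enumerate}
\item Taking $t=0$ shows $S_a\geq 0$, so $S_a$ is lower bounded.
\item Definition~\ref{defn_1} gives $S_a(\vect z)\leq b(\vect z)<\infty$, so $S_a$ is real-valued.
\item For the dissipation inequality, fix an admissible trajectory from $\vect z_0$ on $[0,t]$ staying in $U$, and take any admissible continuation from $\vect z_t = (\vect x_{\texttt e}(t),\varphi(t))$ over $[0,t']$ staying in $U$. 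The concatenated input is again of the form required in \eqref{E:shifted_passivity2}, because concatenating $\mathbf{L}^2_{\mathrm{loc}}$ functions preserves local square integrability. Its trajectory stays in $U$ on $[0,t+t']$. Hence $S_a(\vect z_0)\geq -\int_0^t s\,d\tau - \int_t^{t+t'} s\,d\tau$. Taking the supremum over continuations yields $S_a(\vect z_0) \geq -\int_0^t s\,d\tau + S_a(\vect z_t)$, which is exactly \eqref{E:shifted_passivity2}.
\end{enumerate}

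The main point needing care is the concatenation step in item~3. It relies on the extended system (SG dynamics plus the given $\varphi$-dynamics) being autonomous apart from the input, so that a time-shifted continuation starting from $\vect z_t$ is a legitimate trajectory. The shifted input class must also be shift- and concatenation-invariant. Here the shift is taken relative to $\eqm{\vect u}_{\texttt e}[\varphi(\cdot)]$, and $\varphi(\cdot)$ is itself generated by the state, so the class depends only on the current state. Both properties hold. Existence and uniqueness of solutions for $\mathbf{L}^2_{\mathrm{loc}}$ inputs in the Carathéodory sense follow from smoothness of the vector field, which is also why no differentiability of $S_a$ is claimed.
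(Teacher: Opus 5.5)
Your proof is correct and follows the same route as the paper. For the easy direction you shift $S$ by a lower bound; the paper uses $b_0=\inf_U S$, you use any $-c\le b_0$. For the converse you use the same $U$-relativized available storage, and your concatenation argument (autonomy of the extended flow, closure of the $\mathbf{L}^2_{\mathrm{loc}}$ input class under concatenation) spells out the dynamic-programming step that the paper only cites.
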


For completeness, the proof is given in Appendix~\ref{proof_equiv}.

\begin{remark} \label{lem_supply}
Assume that the SG exports real power at the steady state, i.e, $\langle \eqm{\vect y}_{\texttt e}[0], \eqm{\vect u}_{\texttt e}[0] \rangle < 0$.
Then, with the standard phase dynamics $\dot\varphi = \eqm\omega$, it is infeasible to prove locally shifted passivity, i.e., being 
locally shifted dissipative with respect to the supply rate $s_{\mathrm{sp}}(\vect u_{\texttt e}, \vect y_{\texttt e}, \varphi) \coloneqq \langle \vect y_{\texttt e} - \eqm{\vect y}_{\texttt e}[\varphi], \vect u_{\texttt e} - \eqm{\vect u}_{\texttt e}[\varphi] \rangle$. %i.e., locally shifted passive.
\end{remark}

\begin{proof}
Let us consider the trajectory $\vect x_{\texttt e} = \eqm{\vect x}_{\texttt e}[\varphi + \Delta]$ for a nonzero $\Delta$ arbitrarily close to zero. Clearly, it belongs to a neighborhood of the extended orbit---its distance to the reference $\eqm{\vect x}_{\texttt e}[\varphi]$ remains equal to $\Delta$ for all $t \geq 0$. However, note that
\begin{align*}
    &\quad\,\, s_{\mathrm{sp}}(\eqm{\vect u}_{\texttt e}[\varphi + \Delta], \eqm{\vect y}_{\texttt e}[\varphi + \Delta], \varphi) \\
    &= \big\langle \eqm{\vect y}_{\texttt e}[\varphi + \Delta] - \eqm{\vect y}_{\texttt e}[\varphi], \eqm{\vect u}_{\texttt e}[\varphi + \Delta] - \eqm{\vect u}_{\texttt e}[\varphi] \big\rangle \\
    &= \left\langle (e^{j\Delta} - 1) \eqm{\vect y}_{\texttt e}[\varphi], (e^{j\Delta} - 1) \eqm{\vect u}_{\texttt e}[\varphi] \right\rangle \\
    &= 2(1-\cos(\Delta)) \left\langle \eqm{\vect y}_{\texttt e}[\varphi], \eqm{\vect u}_{\texttt e}[\varphi] \right\rangle ,
\end{align*}
which is negative for $\Delta \in (-\frac{\pi}{2}, \frac{\pi}{2} )$. Hence, by Definition~\ref{defn_1}, the SG system cannot be locally shifted passive, i.e., it cannot be locally shifted dissipative with respect to $s_{\mathrm{sp}}(\vect u_{\texttt e}, \vect y_{\texttt e}, \varphi)$.
\end{proof}

\begin{remark}
It is well-known that passivity requires that the system to absorb power as time goes to infinity; that is, a passive system cannot be a power source. If the steady state is an equilibrium point for the input and output, then we can shift the dynamics to the equilibrium point so that the shifted supply rate becomes zero as time goes to infinity. This gives rise to the concept of shifted passivity \cite{monshizadeh2019conditions}. However, as we see in Remark~\ref{lem_supply}, if the steady state is a periodic orbit for the input and output, then we cannot simply shift away the asymptotic supply rate. This is the primary reason we choose to focus instead on the alternative supply rate \eqref{E:supply}.
\end{remark}

\subsection{Shifted Dissipativity Condition}

For the extended system consisting of \eqref{E:swing}, \eqref{E:stator}, and $\dot\varphi = \eqm\omega$, let us consider the candidate storage function defined as
\begin{align}
    H_{\texttt e}(\vect x_{\texttt e}, \varphi)
    &\coloneqq \frac{1}{2} J \|j \omega\xi - j\eqm\omega \eqm\xi[\varphi]\|^2 + \frac{1}{2} \mathcal K \|\xi - \eqm\xi[\varphi]\|^2 \notag \\
    &\:\, \quad + \frac{1}{2} L \left\| I + \frac{\lambda}{L} \xi - \left(\eqm I[\varphi] + \frac{\lambda}{L} \eqm\xi[\varphi] \right) \right\|^2 \label{E:H_sigma}
\end{align}
Note that the term $L I + \lambda \xi$ is the stator flux linkage.

\begin{proposition}[Energy balance for original storage] \label{prop_ineq}
Along the generator dynamics and $\dot\varphi = \eqm\omega$, it holds that
\begin{align} 
    \dot H_{\texttt e}(\vect x_{\texttt e}, \varphi) &= {-}\left\langle\vect x_{\texttt e} - \eqm{\vect x}_{\texttt e}[\varphi], \matr P_{\texttt e}(\omega, \xi, \varphi) (\vect x_{\texttt e} - \eqm{\vect x}_{\texttt e}[\varphi]) \right\rangle  \notag \\
    &\quad\,  + \frac{\lambda}{L} \left\langle \xi - \eqm\xi[\varphi] , \vect u_{\texttt e} - \eqm{\vect u}_{\texttt e}[\varphi]\right\rangle \notag \\
    &\quad\, + \left\langle \vect y_{\texttt e} - \eqm{\vect y}_{\texttt e}[\varphi], \vect u_{\texttt e} - \eqm{\vect u}_{\texttt e}[\varphi] \right\rangle, \label{E:eb_raw_raw}
\end{align}
where $\matr P_{\texttt e}(\omega, \xi, \varphi)$ is defined in \eqref{E:P}.
\end{proposition}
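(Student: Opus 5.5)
This is a direct differentiation of $H_{\texttt e}$ along the generator dynamics with $\dot\varphi=\eqm\omega$. The key facts are that the steady-state trajectory itself satisfies \eqref{E:swing}--\eqref{E:stator}, and that the reference quantities rotate uniformly: $\eqm\xi[\varphi]=e^{j\varphi}\eqm\xi[0]$, $\eqm I[\varphi]=e^{j\varphi}\eqm I[0]$, $\eqm V[\varphi]=e^{j\varphi}\eqm V[0]$. Consequently $\frac{d}{dt}\eqm z[\varphi]=j\eqm\omega\,\eqm z[\varphi]$ for each of these, and the reference obeys the same equations with $(\omega,\xi,I,V)$ replaced by $(\eqm\omega,\eqm\xi[\varphi],\eqm I[\varphi],\eqm V[\varphi])$. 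Since $\langle \eqm I[\varphi],j\eqm\xi[\varphi]\rangle$ and $\langle \eqm I[\varphi],\eqm\xi[\varphi]\rangle$ are invariant under the common rotation, they equal their values at $\varphi=0$. In particular $\mathcal K=\lambda\langle\eqm I[\varphi],\eqm\xi[\varphi]\rangle$ for all $\varphi$.

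I will write the errors as $\tilde\omega\coloneqq\omega-\eqm\omega$, $\tilde\xi\coloneqq\xi-\eqm\xi[\varphi]$, $\tilde I\coloneqq I-\eqm I[\varphi]$, $\tilde V\coloneqq V-\eqm V[\varphi]$, and $\tilde\psi\coloneqq\tilde I+\frac{\lambda}{L}\tilde\xi$. Note that $\vect x_{\texttt e}-\eqm{\vect x}_{\texttt e}[\varphi]=\col(j\omega\xi-j\eqm\omega\eqm\xi[\varphi],\ \tilde\xi,\ \tilde I)$. I will differentiate the three terms of \eqref{E:H_sigma} separately.

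\emph{Flux term.} Since $\frac{d}{dt}(LI+\lambda\xi)=-RI+V$ exactly (the EMF $-\lambda j\omega\xi$ cancels with $\lambda\dot\xi=j\omega\lambda\xi$), and the same holds for the reference, we get
\[
\tfrac{d}{dt}\tfrac12 L\|\tilde\psi\|^2=\langle\tilde\psi,-R\tilde I+\tilde V\rangle .
\]
This yields $\langle\tilde I,\tilde V\rangle+\frac{\lambda}{L}\langle\tilde\xi,\tilde V\rangle$, which are exactly the two supply terms, plus the quadratic remainder $-R\|\tilde I\|^2-\frac{R\lambda}{L}\langle\tilde\xi,\tilde I\rangle$. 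This explains why the flux-linkage storage produces the cross term in \eqref{E:supply}.

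\emph{Kinetic term.} Let $z\coloneqq j\omega\xi$. Then $\frac{d}{dt}(j\omega\xi)=j\dot\omega\xi-\omega^2\xi$. I will substitute $J\dot\omega$ from \eqref{E:swing1} for both the actual and reference trajectories; for the reference, $J\dot{\eqm\omega}=0$, so $-D\eqm\omega+T_m+\langle\eqm I,j\lambda\eqm\xi\rangle=0$. The difference then involves the torque error $-D\tilde\omega+\lambda(\langle I,j\xi\rangle-\langle\eqm I,j\eqm\xi\rangle)$, which is bilinear in errors and references. Using $\|\xi\|=\|\eqm\xi\|=1$, the term $\frac12J\|z-\eqm z\|^2=\frac12J(\omega^2+\eqm\omega^2)-J\omega\eqm\omega\langle\xi,\eqm\xi\rangle$ can be differentiated directly.

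\emph{$\mathcal K$ term.} Here $\frac{d}{dt}\frac12\mathcal K\|\tilde\xi\|^2=\mathcal K\langle\tilde\xi,j\omega\xi-j\eqm\omega\eqm\xi\rangle$. The role of $\mathcal K$ is to cancel the non-sign-definite coupling $\lambda\langle\eqm I,\cdot\rangle$ that arises from expanding $\langle I,j\xi\rangle-\langle\eqm I,j\eqm\xi\rangle=\langle\tilde I,j\xi\rangle+\langle\eqm I,j\tilde\xi\rangle$; the invariance of $\langle\eqm I,\eqm\xi\rangle$ noted above is what makes this cancellation go through for every $\varphi$.

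Finally, I will collect every term that does not involve $\tilde V$ and write it as a Hermitian quadratic form in $\vect x_{\texttt e}-\eqm{\vect x}_{\texttt e}[\varphi]$ with coefficients depending on $(\omega,\xi,\varphi)$. For example, I will express $\tilde\omega$ through $\langle -j\xi, z-\eqm z\rangle$ together with a correction proportional to $\tilde\xi$, and products such as $\langle\tilde I,j\xi\rangle\tilde\omega$ as entries with state-dependent coefficients. This collected form defines $\matr P_{\texttt e}$ in \eqref{E:P}, which I take as given by this bookkeeping. The main obstacle is this last step: rewriting the scalar errors $\tilde\omega$ and the trilinear terms, which are nonlinear on $\mathbb{S}^1$, exactly as $\langle\cdot,\matr P_{\texttt e}\,\cdot\rangle$ with no leftover term linear in the error. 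I would first verify that every residual linear term cancels by using the reference equations; only then would I assign the remaining bilinear terms to entries of $\matr P_{\texttt e}$, symmetrizing as needed.
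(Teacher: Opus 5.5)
Your route is sound, and it is organized differently from the paper's. The paper subtracts $\frac{\lambda^2}{2L}\|\tilde\xi\|^2+\lambda\langle\tilde I,\tilde\xi\rangle$ from $H_{\texttt e}$ to obtain a quadratic $\mathcal H_{\texttt e}$ with gradient $(x_1,\mathcal K\tilde\xi,\tilde I)$, where $x_1\coloneqq j\omega\xi-j\eqm\omega\eqm\xi[\varphi]$. It then casts the error dynamics in almost port-Hamiltonian form with the matrix $\matr F$ and adds the subtracted terms back at the end. Your one-line flux computation is cleaner: it shows directly that the flux-linkage storage alone produces both supply terms and every $R$-dependent entry of $\matr P_{\texttt e}$. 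Your flux and $\mathcal K$ computations are correct.

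However, the proposal stops exactly where the content of the proposition lies. The kinetic term is never actually differentiated, and you say the collected terms ``define'' $\matr P_{\texttt e}$ and take it as given. But $\matr P_{\texttt e}$ is fixed by \eqref{E:P}. Because its entries depend on $(\omega,\xi,\varphi)$, a second-order remainder can be distributed over the entries of a state-dependent matrix in many ways. Collecting and symmetrizing therefore does not show that the particular entries of \eqref{E:P} result, and those entries are what is used afterwards to extract \eqref{E:eb_raw}.

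Your heuristics also aim at the wrong difficulty. No linear terms can appear: each term of $\dot H_{\texttt e}$ is the inner product of an error with the derivative of an error, and both vanish at $(\vect x_{\texttt e},V)=(\eqm{\vect x}_{\texttt e}[\varphi],\eqm V[\varphi])$. Nor does $\mathcal K$ cancel the torque-error coupling $\lambda\langle x_1,j\xi\rangle\langle\eqm I[\varphi],j\tilde\xi\rangle$. It cancels only its quadratic part $-\mathcal K\tilde\omega\delta$ (with $\delta$ as in \eqref{E:notation}). The surviving cubic residue is precisely what the state-dependent entry $\frac14\lambda\eqm I[\varphi]^*(\xi-\eqm\xi[\varphi])$ of $\matr P_{\texttt e}$ encodes.

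What is missing are identities such as the following, where $J\dot\omega=-D\tilde\omega+\lambda\langle\tilde I,j\xi\rangle+\lambda\langle\eqm I[\varphi],j\tilde\xi\rangle$ by the reference torque balance:
\begin{align*}
&\tfrac{d}{dt}\tfrac12 J\|x_1\|^2=\langle x_1,j\xi\rangle J\dot\omega+J\omega\eqm\omega\langle x_1,\tilde\xi\rangle,\\
&{-D}\tilde\omega\langle x_1,j\xi\rangle=-D\|x_1\|^2+D\eqm\omega\langle x_1,j\tilde\xi\rangle,\\
&\langle x_1,j\xi\rangle\langle\tilde I,j\xi\rangle=\langle x_1,\tilde I\rangle+\eqm\omega\langle\tilde\xi,j\tilde I\rangle\\
&\qquad+\tfrac{\eqm\omega}{2}\Re\{j\eqm\xi[\varphi](\tilde\xi^*)^2\tilde I\},\\
&\lambda\langle x_1,j\xi\rangle\langle\eqm I[\varphi],j\tilde\xi\rangle+\mathcal K\langle\tilde\xi,x_1\rangle\\
&\qquad=-\tfrac{\lambda}{2}\Re\{x_1^*\eqm I[\varphi]^*\tilde\xi^2\}.
\end{align*}
All terms are invariant under a common rotation, so each identity can be checked with $\eqm\xi[\varphi]=1$.

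Each right-hand side is one piece of $-\langle\vect x_{\texttt e}-\eqm{\vect x}_{\texttt e}[\varphi],\matr P_{\texttt e}(\vect x_{\texttt e}-\eqm{\vect x}_{\texttt e}[\varphi])\rangle$:
\begin{itemize}
\item the first two lines give $D$ and $-\frac12(J\eqm\omega\omega+jD\eqm\omega)$;
\item the third, multiplied by $\lambda$, gives $-\frac12\lambda$ and $-\frac12 j\lambda\eqm\omega-\frac14\lambda j\eqm\omega\eqm\xi[\varphi](\xi-\eqm\xi[\varphi])^*$;
\item the fourth gives $\frac14\lambda\eqm I[\varphi]^*(\xi-\eqm\xi[\varphi])$;
\item your flux identity gives $R$ and $\frac{\lambda R}{2L}$.
\end{itemize}
With these identities, your plan becomes a complete proof.
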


\begin{figure*}[t]
\begin{align}
    &\quad\,\,\matr P_{\texttt e}(\omega, \xi, \varphi) = \begin{bmatrix}
        D &-\frac{1}{2} \left(J\eqm\omega \omega + j D\eqm\omega \right)+ \frac{1}{4}\lambda \eqm I[\varphi]^*  (\xi - \eqm\xi[\varphi]) &-\frac{1}{2} \lambda \\[3pt]
        * &0 &\frac{1}{2} \left(\frac{\lambda R}{L} - j \lambda \eqm\omega \right)-\frac{1}{4} \lambda j\eqm\omega  \eqm\xi[\varphi]  (\xi - \eqm\xi[\varphi])^* \\[3pt]
        * &* &R
    \end{bmatrix} \label{E:P}
\end{align}
\hrulefill
\vspace*{4pt}
\end{figure*}

The proof is given in Appendix~\ref{proof_ineq}. 

Using
\begin{align}
    &\quad\,\, \|j\omega\xi - j \eqm\omega\eqm\xi[\varphi]\|^2 \notag \\
    &= (\omega - \eqm\omega)^2 + \eqm\omega\omega \|\xi - \eqm\xi[\varphi]\|^2 \notag \\ %\label{E:auxiliary} \\
    &= (\omega - \eqm\omega)^2 + \left\langle j\omega\xi - j\eqm\omega\eqm\xi[\varphi], j\eqm\omega (\xi - \eqm\xi[\varphi]) \right\rangle \notag \\
    &\quad\, +O\!\left(\|\vect x_{\texttt e} - \eqm{\vect x}_{\texttt e}[\varphi]\|^3\right), \notag 
\end{align}
and
\begin{equation*} %\label{E:expansion2}
    \xi-\eqm\xi[ \varphi]
    =
    j\eqm\xi[ \varphi] (\theta - \eqm\theta[\varphi]) +O\!\left(\|\vect x_{\texttt e} - \eqm{\vect x}_{\texttt e}[\varphi]\|^2\right),
\end{equation*}
we obtain the second-order part of \eqref{E:eb_raw_raw} as
\begin{align}
    &\quad\,\, \dot{H}_{\texttt e}(\vect x_{\texttt e}, \varphi) \notag \\
    &= -D(\omega - \eqm\omega)^2 + J\eqm\omega^2(\omega - \eqm\omega) (\theta - \eqm\theta[\varphi]) \notag \\
    &\quad\, + \lambda (\omega - \eqm\omega) \left\langle j \eqm\xi[\varphi], I - \eqm I[\varphi] \right\rangle \notag \\
    &\quad\, - \frac{\lambda R}{L} (\theta - \eqm\theta[\varphi]) \left\langle j\eqm\xi[ \varphi], I - \eqm I[\varphi] \right\rangle - R\|I - \eqm I[\varphi] \|^2 \notag \\
    &\quad\, + \frac{\lambda}{L} \left\langle \xi - \eqm\xi[\varphi] , \vect u_{\texttt e} - \eqm{\vect u}_{\texttt e}[\varphi]\right\rangle + \left\langle \vect y_{\texttt e} - \eqm{\vect y}_{\texttt e}[\varphi], \vect u_{\texttt e} - \eqm{\vect u}_{\texttt e}[\varphi] \right\rangle \notag \\
    &\quad\, + O\!\left(\|\vect x_{\texttt e} - \eqm{\vect x}_{\texttt e}[\varphi]\|^3\right). \label{E:eb_raw}
\end{align}
Eq. \eqref{E:eb_raw} falls short of proving local shifted dissipativity due to the lack of angle dissipation, i.e., $-(\theta - \eqm\theta)^2$. This is addressed by augmenting the storage function $H_{\texttt e}(\vect x_{\texttt e}, \varphi)$ with the cross term $(\omega - \eqm\omega)\left\langle \xi, j \eqm\xi[\varphi] \right\rangle$.

\begin{proposition}[Local definiteness of augmented storage] \label{lem_pos}
For $\sigma_{\texttt e} \in \mathbb{R}$, define the augmented storage
\begin{align}
    \widetilde H_{\texttt e}(\vect x_{\texttt e}, \varphi; \sigma_{\texttt e})
    &\coloneqq
    H_{\texttt e}(\vect x_{\texttt e}, \varphi)+\sigma_{\texttt e} J (\omega - \eqm\omega) \left\langle \xi, j \eqm\xi[\varphi]\right\rangle \notag \\
    &\quad\:\, + \frac{1}{2} (\sigma_{\texttt e} D - J\eqm\omega\omega) \| \xi- \eqm\xi[\varphi] \|^2. \label{E:augmented_storage}
\end{align}
If 
\begin{equation} \label{E:pos_condition}
\mathcal K + \sigma_{\texttt e} D - \sigma_{\texttt e}^2 J > 0, 
\end{equation}
then the function \(\widetilde H_{\texttt e}(\vect x_{\texttt e}, \varphi; \sigma_{\texttt e})\) is locally positive definite in $\vect x_{\texttt e}$ with respect to
\(\vect x_{\texttt e}=\eqm{\vect x}_{\texttt e}[\varphi]\).
\end{proposition}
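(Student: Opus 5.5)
The plan is to pass to local coordinates around the reference point $\eqm{\vect x}_{\texttt e}[\varphi]$ and show that the quadratic part of $\widetilde H_{\texttt e}$ is a positive definite form, with the cubic remainder uniform in $\varphi$. Near the orbit, the SG state manifold $\mathbb{X}_{\texttt e}$ is parametrized by $(\omega,\theta,I)\in\mathbb{R}\times\mathbb{T}\times\mathbb{C}$. We write $\delta\omega\coloneqq\omega-\eqm\omega$, $\delta\theta\coloneqq\theta-\eqm\theta[\varphi]$ and $\delta I\coloneqq I-\eqm I[\varphi]$. In this chart, $\|\vect x_{\texttt e}-\eqm{\vect x}_{\texttt e}[\varphi]\|$ is locally equivalent to $|\delta\omega|+|\delta\theta|+\|\delta I\|$, uniformly in $\varphi\in\mathbb{T}$, since $\eqm\omega>0$ is fixed and everything is rotation-covariant. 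It therefore suffices to show that $\widetilde H_{\texttt e}$ equals a positive definite quadratic form in $(\delta\omega,\delta\theta,\delta I)$ plus a term of order $O(\|\vect x_{\texttt e}-\eqm{\vect x}_{\texttt e}[\varphi]\|^3)$, with constants independent of $\varphi$.

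\textbf{Expanding the terms.} Write $\xi=\eqm\xi[\varphi]e^{j\delta\theta}$.
\begin{itemize}
\item \emph{Angle terms.} We have $\|\xi-\eqm\xi[\varphi]\|^2=2(1-\cos\delta\theta)=\delta\theta^2+O(\delta\theta^4)$ and $\langle\xi,j\eqm\xi[\varphi]\rangle=\Re\{e^{-j\delta\theta}j\}=\sin\delta\theta=\delta\theta+O(\delta\theta^3)$.
\item \emph{Kinetic term.} By the identity displayed before \eqref{E:eb_raw}, $\frac12 J\|j\omega\xi-j\eqm\omega\eqm\xi[\varphi]\|^2=\frac12 J\delta\omega^2+\frac12 J\eqm\omega\omega\|\xi-\eqm\xi[\varphi]\|^2$ holds exactly.
\item \emph{Cancellation.} The second summand above cancels exactly against $-\frac12 J\eqm\omega\omega\|\xi-\eqm\xi[\varphi]\|^2$ in \eqref{E:augmented_storage}. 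This is the purpose of that term, and it removes the awkward state-dependent factor $\omega$.
\item \emph{Flux term.} Since $\xi-\eqm\xi[\varphi]=j\eqm\xi[\varphi]\delta\theta+O(\delta\theta^2)$, the flux term is $\frac12 L\|\delta\psi\|^2+O(\|\cdot\|^3)$, where $\delta\psi\coloneqq\delta I+\frac{\lambda}{L}j\eqm\xi[\varphi]\delta\theta$.
\end{itemize}

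\textbf{Resulting quadratic form.} Collecting terms gives
\begin{equation*}
\widetilde H_{\texttt e}=\frac12\begin{bmatrix}\delta\omega\\ \delta\theta\end{bmatrix}^{\tran}\begin{bmatrix}J & \sigma_{\texttt e}J\\ \sigma_{\texttt e}J & \mathcal K+\sigma_{\texttt e}D\end{bmatrix}\begin{bmatrix}\delta\omega\\ \delta\theta\end{bmatrix}+\frac12 L\|\delta\psi\|^2+O\!\left(\|\vect x_{\texttt e}-\eqm{\vect x}_{\texttt e}[\varphi]\|^3\right).
\end{equation*}
The $2\times2$ matrix has $J>0$ and determinant $J(\mathcal K+\sigma_{\texttt e}D-\sigma_{\texttt e}^2J)$, which is positive by \eqref{E:pos_condition}. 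Hence the matrix is positive definite. The map $(\delta\omega,\delta\theta,\delta I)\mapsto(\delta\omega,\delta\theta,\delta\psi)$ is a linear bijection with unit-triangular structure and coefficients bounded uniformly in $\varphi$, because $|\eqm\xi[\varphi]|=1$. So the quadratic part is bounded below by $c(\delta\omega^2+\delta\theta^2+\|\delta I\|^2)$ for some $c>0$ independent of $\varphi$. Choosing the neighbourhood small enough that the cubic remainder is at most $\frac{c}{2}$ times this bound yields $\widetilde H_{\texttt e}>0$ away from $\vect x_{\texttt e}=\eqm{\vect x}_{\texttt e}[\varphi]$. At that point $\widetilde H_{\texttt e}$ vanishes, since $\sin0=0$ and every other term is zero.

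\textbf{Main obstacle.} The delicate step is bookkeeping the remainders uniformly in $\varphi$ and checking that no hidden second-order contributions remain. Two sources need checking:
\begin{itemize}
\item the product $\sigma_{\texttt e}J\delta\omega(\sin\delta\theta-\delta\theta)$, which is fourth order;
\item the cross terms from expanding $\xi-\eqm\xi[\varphi]$ inside the flux norm, which are third order.
\end{itemize}
Uniformity should follow because $\eqm{\vect x}_{\texttt e}[\varphi]$ depends on $\varphi$ only through a rotation by $e^{j\varphi}$ (single-frequency steady state), and all the norms involved are rotation invariant.
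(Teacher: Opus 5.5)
Your proposal is correct and follows essentially the same route as the paper. The paper also expands $\widetilde H_{\texttt e}$ to second order in $(\nu,\delta,\zeta)$ and writes the quadratic part as $\frac12 J(\nu+\sigma_{\texttt e}\delta)^2+\frac12(\mathcal K+\sigma_{\texttt e}D-\sigma_{\texttt e}^2J)\delta^2+\frac12 L\|\zeta+\frac{\lambda}{L}\iota\|^2$, which is your $2\times2$ matrix with its determinant test written as a completed square, together with the same flux term. Your explicit treatment of the exact cancellation of the $\frac12 J\eqm\omega\omega\|\xi-\eqm\xi[\varphi]\|^2$ terms, and of the uniformity of the cubic remainder in $\varphi$, supplies details the paper leaves implicit.
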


\begin{proof}
Let
\begin{align}
    &\nu\coloneqq\omega-\eqm\omega, &&
    \delta\coloneqq\arg(\eqm\xi[\varphi]^*\xi), && \zeta \coloneqq I - \eqm I[\varphi], \label{E:notation} \\
    &\iota \coloneqq \xi - \eqm\xi[\varphi], \notag
\end{align}
where the branch of \(\arg(\cdot)\) is chosen locally near zero.
We can write
\begin{align*}
    \widetilde H_{\texttt e}(\vect x_{\texttt e}, \varphi; \sigma_{\texttt e}) 
    &= \frac{1}{2}J (\nu +\sigma_{\texttt e} \delta)^2 + \frac{1}{2} (\mathcal K + \sigma_{\texttt e}D - \sigma_{\texttt e}^2 J) \delta^2 \notag \\
    &\quad\, + \frac{1}{2} L\left\|\zeta + \frac{\lambda}{L} \iota \right\|^2 + O\!\left(\|\vect x_{\texttt e} - \eqm{\vect x}_{\texttt e}[\varphi]\|^3\right)
\end{align*}
Hence $\widetilde H_{\texttt e}(\vect x_{\texttt e}, \varphi; \sigma_{\texttt e})$ is locally positive definite if and only if \eqref{E:pos_condition} holds.
\end{proof}

\begin{proposition}[Energy balance for augmented storage] \label{lem_diss}
The time derivative of the augmented storage function \eqref{E:augmented_storage} along the generator dynamics with
\(\dot \varphi=\eqm\omega\) is written as
\begin{align}
    \dot{\widetilde H}_{\texttt e}(\vect x_{\texttt e}, \varphi; \sigma_{\texttt e}) 
    &=
    -\widetilde{\mathcal Q}_{\texttt e}
    +
    \frac{\lambda}{L}
    \left\langle
        \xi-\eqm\xi[ \varphi],
        \vect u_{\texttt e} - \eqm{\vect u}_{\texttt e}[\varphi]
    \right\rangle \notag \\
    &\quad\,+ \left\langle \vect y_{\texttt e}-\eqm{\vect y}_{\texttt e}[\varphi], \vect u_{\texttt e} - \eqm{\vect u}_{\texttt e}[\varphi] \right\rangle
    \notag \\
    &\quad\, + O\!\left(\|\vect x_{\texttt e} - \eqm{\vect x}_{\texttt e}[\varphi]\|^3\right), \label{E:eb_raw2}
\end{align}
where the second-order part $\widetilde{\mathcal Q}_{\texttt e}$ is locally positive definite in $\vect x_{\texttt e} - \eqm{\vect x}_{\texttt e}[\varphi]$ if 
\begin{align}
    &\matr M_{\texttt e} (\sigma_{\texttt e}) \coloneqq \begin{bmatrix}
        D - \sigma_{\texttt e} J &0 &\frac{\lambda}{2} \\[2pt]
        * &\sigma_{\texttt e} \mathcal K &-\frac{\lambda}{2} \left(\frac{R}{L} - \sigma_{\texttt e} \right) \\[2pt]
        * &* &R
    \end{bmatrix} \succ 0. \label{E:diss_condition}
\end{align}
\end{proposition}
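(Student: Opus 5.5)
The derivative \eqref{E:eb_raw2} splits as $\dot H_{\texttt e}$ plus the derivative of the two added terms. For $\dot H_{\texttt e}$ I use the second-order expansion \eqref{E:eb_raw}, which already contains the supply-rate terms, and I compute the extra terms directly from \eqref{E:swing} and \eqref{E:stator}. Throughout I work in the local coordinates \eqref{E:notation}: $\nu$, $\delta$, $\zeta$, with $\iota = j\eqm\xi[\varphi]\delta + O(\delta^2)$. I also use the steady-state identities. The first is $J\eqm\omega\cdot 0 = -D\eqm\omega + T_m + \lambda\langle \eqm I, j\eqm\xi\rangle$. The second is the steady-state stator equation, which gives $L\,\dot{\eqm I} = Lj\eqm\omega\eqm I = -R\eqm I - \lambda j\eqm\omega\eqm\xi + \eqm V$. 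I keep only terms of order two in $(\nu,\delta,\zeta, \vect u_{\texttt e}-\eqm{\vect u}_{\texttt e})$ and push everything else into $O(\|\vect x_{\texttt e}-\eqm{\vect x}_{\texttt e}[\varphi]\|^3)$. Since $\vect u_{\texttt e}$ enters the dynamics linearly, I keep $\vect u_{\texttt e}$-terms exactly.

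\textbf{Step 1: derivative of the cross term.} Note that $\langle \xi, j\eqm\xi[\varphi]\rangle = \sin\delta$, so the added cross term is $\sigma_{\texttt e}J\nu\sin\delta$. Its derivative is
\[
\sigma_{\texttt e}\big(J\dot\nu\big)\sin\delta + \sigma_{\texttt e}J\nu\cos\delta\,\nu .
\]
I expand $J\dot\nu = -D\nu + \lambda\langle \zeta, j\eqm\xi\rangle + \lambda\langle\eqm I, j\iota\rangle + O(2)$, using $\langle I, j\xi\rangle - \langle \eqm I, j\eqm\xi\rangle$ to first order. With $j\iota \approx -\eqm\xi\delta$, the term $\lambda\langle\eqm I,j\iota\rangle$ becomes $-\mathcal K\delta$. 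This produces the quadratic form
\[
\sigma_{\texttt e}J\nu^2 - \sigma_{\texttt e}D\nu\delta - \sigma_{\texttt e}\mathcal K\delta^2 + \sigma_{\texttt e}\lambda\delta\langle j\eqm\xi,\zeta\rangle .
\]
The quadratic term $\frac12(\sigma_{\texttt e}D - J\eqm\omega\omega)\|\iota\|^2 \approx \frac12(\sigma_{\texttt e}D - J\eqm\omega^2)\delta^2$ has derivative $(\sigma_{\texttt e}D - J\eqm\omega^2)\delta\nu$ up to cubic terms, since $\dot\delta = \nu$. Here $\omega$ is replaced by $\eqm\omega$, and the error is cubic.

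\textbf{Step 2: collecting terms.} I add Step 1 to \eqref{E:eb_raw}, using $\langle j\eqm\xi,\zeta\rangle = \langle \zeta, j\eqm\xi\rangle$ and $\theta-\eqm\theta = \delta$. The $\nu\delta$ terms cancel: $J\eqm\omega^2$ cancels $-J\eqm\omega^2$, and $-\sigma_{\texttt e}D$ cancels $+\sigma_{\texttt e}D$. This cancellation is exactly why the augmentation in \eqref{E:augmented_storage} is chosen as it is. The remaining second-order part is
\[
-\widetilde{\mathcal Q}_{\texttt e} = -(D-\sigma_{\texttt e}J)\nu^2 - \sigma_{\texttt e}\mathcal K\delta^2 - R\|\zeta\|^2 + \lambda\nu\langle j\eqm\xi,\zeta\rangle - \lambda\Big(\tfrac{R}{L}-\sigma_{\texttt e}\Big)\delta\langle j\eqm\xi,\zeta\rangle .
\]
The supply-rate terms pass through unchanged, because the added terms do not involve $\vect u_{\texttt e}$ at all; their derivatives depend only on the swing equation.

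\textbf{Step 3: definiteness.} Decompose $\zeta = \zeta_\parallel j\eqm\xi + \zeta_\perp \eqm\xi$ with $\zeta_\parallel,\zeta_\perp$ real. The $\zeta_\perp$ component appears only in $R\zeta_\perp^2 > 0$. For the variables $(\nu,\delta,\zeta_\parallel)$, the form $\widetilde{\mathcal Q}_{\texttt e}$ equals $w^\tran \matr M_{\texttt e}(\sigma_{\texttt e}) w$, up to the sign convention for the $\lambda/2$ off-diagonal entry. Flipping $\zeta_\parallel \to -\zeta_\parallel$ is a congruence and does not affect positive definiteness. Hence $\matr M_{\texttt e}\succ 0$ gives $\widetilde{\mathcal Q}_{\texttt e} \ge c(\nu^2+\delta^2+\|\zeta\|^2)$, which is equivalent to local positive definiteness in $\vect x_{\texttt e}-\eqm{\vect x}_{\texttt e}[\varphi]$ on $\mathbb X_{\texttt e}$.

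\textbf{Main obstacle.} The hard part is the bookkeeping in Step 1. I must verify the sign of $\mathcal K$ in the linearized torque $\langle I, j\lambda\xi\rangle$ and confirm that the Step 1 terms combine with the $\nu\delta$ coefficients of \eqref{E:eb_raw} into an exact cancellation. I would check this first by linearizing everything in $(\nu,\delta,\zeta)$ rather than working with the complex form.
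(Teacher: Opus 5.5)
Your proposal is correct and follows the paper's proof essentially step for step. You start from \eqref{E:eb_raw}, differentiate the added terms $\sigma_{\texttt e}J\nu\sin\delta+(\sigma_{\texttt e}D-J\eqm\omega\omega)(1-\cos\delta)$ using the linearized swing equation, and observe that all $\nu\delta$ terms cancel (your sign $-J\eqm\omega^2\nu\delta$ for the $(1-\cos\delta)$ contribution is the one that makes this cancellation work), which lands exactly on \eqref{E:dot_Q}. The only difference is cosmetic and lies in the final reduction to $\matr M_{\texttt e}(\sigma_{\texttt e})\succ 0$: you split $\zeta$ into real components along $j\eqm\xi[\varphi]$ and $\eqm\xi[\varphi]$ and flip the sign of the first (the paper's proof of Lemma~\ref{lem:phase-dynamics} later uses this same coordinate $-\langle\zeta,j\eqm\xi\rangle$), whereas the paper takes a Schur complement of the complex Hermitian matrix $\widetilde{\matr P}_{\texttt e}$ with respect to $R$.
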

\vspace{\topsep}

\begin{proof}
From \eqref{E:eb_raw},
the second-order part of the generator dissipation associated with the original storage
\(H_{\texttt e}(\vect x_{\texttt e}, \varphi)\) is expressed in the notation of \eqref{E:notation} as
\begin{align*}
    \mathcal Q_{\texttt e}(\nu, \delta, \zeta)
    &=
    D\nu^2
    -
    J\eqm\omega^2\nu\delta
    -
    \lambda\nu
    \left\langle
        j\eqm\xi[ \varphi],\zeta
    \right\rangle                                      \\
    &\quad\, +
    \frac{\lambda R}{L}
    \delta \left\langle
        \zeta,j\eqm\xi[ \varphi]
    \right\rangle
    +
    R\|\zeta\|^2 .
\end{align*}
The first-order part of the swing equation \eqref{E:swing} is
\begin{align*}
J\dot\nu
&=
-D\nu
-
\mathcal K \delta
+
\lambda
\big\langle \zeta,j\eqm\xi[\varphi]\big\rangle +
O(\|\vect x_{\texttt e} - \eqm{\vect x}_{\texttt e}[\varphi]\|^2), \\
\dot\delta
&=
\nu.
\end{align*}
Hence the time derivative of 
$\mathcal Z\coloneqq \sigma_{\texttt e}J\nu \sin(\delta) + (\sigma_{\texttt e} D - J\eqm\omega^2) (1 - \cos(\delta))$
is
\begin{align*}
    \dot{\mathcal Z}
    &=
    J\eqm\omega^2 \nu\delta
    -
     \sigma_{\texttt e}\mathcal K\mathcal\delta^2
    +
     \sigma_{\texttt e}\lambda
    \delta \left\langle
        \zeta,j\eqm\xi[ \varphi]
    \right\rangle
    +
     \sigma_{\texttt e}J\nu^2 \\
    &\quad\, + O\!\left(\|\vect x_{\texttt e} - \eqm{\vect x}_{\texttt e}[\varphi]\|^3\right).
\end{align*}
Since
$\widetilde H_{\texttt e}(\vect x_{\texttt e}, \varphi; \sigma_{\texttt e})=H_{\texttt e}(\vect x_{\texttt e}, \varphi)+\mathcal Z$, we have
\[
    \widetilde{\mathcal Q}_{\texttt e}(\nu, \delta, \zeta)
    =
    \mathcal Q_{\texttt e}(\nu, \delta, \zeta)-(\dot{\mathcal Z})_\mathrm{quad},
\]
where $(\dot{\mathcal Z})_\mathrm{quad}$ is the second-order part of $\dot{\mathcal Z}$.
This gives
\begin{align}
    \widetilde{\mathcal Q}_{\texttt e}(\nu, \delta, \zeta) 
    &=
    (D-\sigma_{\texttt e} J)\nu^2
    %+
    %(\sigma_{\texttt e} D- \sigma_{\texttt e}^2 J-J\eqm\omega^2)\nu\delta
    +
    \sigma_{\texttt e} \mathcal K \delta^2 +
    R\|\zeta\|^2
     \notag \\
    &\quad\, +
    \lambda
    \left[
        \left(\frac{R}{L}-\sigma_{\texttt e}\right)\delta
        -\nu
    \right]
    \left\langle
        \zeta,j\eqm\xi[ \varphi]
    \right\rangle . \label{E:dot_Q}
\end{align}
The right-hand side can be expressed as
\begin{equation*}
    \begin{bmatrix}
        \nu \\
        \delta \\
        \zeta
    \end{bmatrix}^* \widetilde{\matr P}_{\texttt e}(\varphi, \sigma_{\texttt e}) \begin{bmatrix}
        \nu \\
        \delta \\
        \zeta
    \end{bmatrix},
\end{equation*}
where
\begin{align*}
    &\widetilde{\matr P}_{\texttt e}(\varphi, \sigma_{\texttt e}) = \begin{bmatrix}
        D - \sigma_{\texttt e} J &0 &\frac{\lambda}{2} j \eqm\xi[\varphi]^* \\[3pt]
        * &\sigma_{\texttt e} \mathcal K &-\frac{\lambda}{2} \left(\frac{R}{L} - \sigma_{\texttt e} \right) j \eqm\xi[\varphi]^* \\[3pt]
        * &* &R
    \end{bmatrix}.
\end{align*}
Since $R> 0$ and $\|\eqm\xi[\varphi]\| = 1$, we take the Schur complements of the last diagonal elements to verify
\begin{equation*}
    \matr M_{\texttt e}(\sigma_{\texttt e}) \succ 0 \quad \Longleftrightarrow \quad \widetilde{\matr P}_{\texttt e}(\varphi, \sigma_{\texttt e}) \succ 0.
\end{equation*}
This proves that $\widetilde{\mathcal Q}_{\texttt e}$ is positive definite in $(\nu, \delta, \zeta)$ and dominates the third-order remainder in a small neighborhood of $\eqm{\vect x}_{\texttt e} = \eqm{\vect x}_{\texttt e}[\varphi]$.
\end{proof}

\begin{remark} \label{rem_pos}
If \eqref{E:diss_condition} holds for some $\sigma_{\texttt e} \in \mathbb{R}$, then  \eqref{E:pos_condition} is satisfied if and only if $\mathcal K > 0$, and consequently $\sigma_{\texttt e} > 0$.
\end{remark}

\begin{proof}
\emph{Step 1 ($\Longleftarrow$).} Since the diagonal elements of $\matr M_{\texttt e}(\sigma_{\texttt e})$ must be positive for \eqref{E:diss_condition} to hold, we have $\sigma_{\texttt e} \mathcal K > 0$ and $D - \sigma_{\texttt e} J > 0$. If $\mathcal K > 0$, then $\sigma_{\texttt e} > 0$. This implies $\sigma_{\texttt e}(D - \sigma_{\texttt e} J ) > 0$, which together with $\mathcal K > 0$ implies \eqref{E:pos_condition}.

\emph{Step 2 ($\Longrightarrow$).}  Suppose \eqref{E:pos_condition} and \eqref{E:diss_condition} both hold with $\mathcal K < 0$. Then $\sigma_{\texttt e} < 0$ and $\sigma_{\texttt e} (D - \sigma_{\texttt e} J) < 0$. This contradicts with \eqref{E:pos_condition}. Hence \eqref{E:pos_condition} together with \eqref{E:diss_condition} implies $\mathcal K > 0$.
\end{proof}

\begin{remark}
From Proposition~\ref{lem_diss}, the  added cross term in the storage produces the angle dissipation $\sigma_{\texttt e} \mathcal K \delta^2$. Ref.~\cite{barabanov2017conditions} and the sequel \cite{schiffer2019global} take a very different approach to create angle dissipation. The authors construct a special function, called Leonov function, which is negative definite in $\delta$ and decays exponentially along the trajectory. This function implies the boundedness of the unwound angle $\delta$ in $\mathbb{R}$. Combined with an energy function in the traditional sense, it implies almost global stability. In comparison, our local stability analysis allows multiple locally asymptotically stable steady states, as illustrated in the SMIB example in Fig.~\ref{fig:certificate_sweeps}. Moreover, the power-system model in our analysis includes the transmission-line dynamics, while existing works usually assume lossless and quasi-steady-state lines~\cite{gross2019effect}.
\end{remark}

\begin{corollary} \label{cor}
An SMIB system given by \eqref{E:swing} and \eqref{E:stator} with $V = \eqm V[\varphi]$ is locally asymptotically stable if there exists $\sigma_{\texttt e} > 0$ that verifies condition \eqref{E:diss_condition}.
\end{corollary}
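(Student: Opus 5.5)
The plan is a Lyapunov argument with $\widetilde H_{\texttt e}(\vect x_{\texttt e},\varphi;\sigma_{\texttt e})$ from \eqref{E:augmented_storage}, taken along the extended system \eqref{E:swing}, \eqref{E:stator}, $\dot\varphi=\eqm\omega$, with the infinite bus imposing $V=\eqm V[\varphi]$. The input is then pinned to the reference, so $\vect u_{\texttt e}-\eqm{\vect u}_{\texttt e}[\varphi]\equiv 0$. Both supply terms in \eqref{E:eb_raw2} vanish, and Proposition~\ref{lem_diss} gives
\[
\dot{\widetilde H}_{\texttt e} = -\widetilde{\mathcal Q}_{\texttt e}(\nu,\delta,\zeta) + O\!\left(\|\vect x_{\texttt e}-\eqm{\vect x}_{\texttt e}[\varphi]\|^3\right).
\]
Since $\sigma_{\texttt e}>0$ satisfies \eqref{E:diss_condition}, the quadratic form $\widetilde{\mathcal Q}_{\texttt e}$ is positive definite in $(\nu,\delta,\zeta)$. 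Hence there are $c_1>0$ and a neighborhood on which $\dot{\widetilde H}_{\texttt e}\le -c_1\|\vect x_{\texttt e}-\eqm{\vect x}_{\texttt e}[\varphi]\|^2$. Here I use that $(\nu,\delta,\zeta)$ are local coordinates for the error on $\mathbb{X}_{\texttt e}$, so their norm is equivalent to $\|\vect x_{\texttt e}-\eqm{\vect x}_{\texttt e}[\varphi]\|$ near the reference.

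Next I need positive definiteness of $\widetilde H_{\texttt e}$, i.e.\ condition \eqref{E:pos_condition}, which Proposition~\ref{lem_pos} requires. Remark~\ref{rem_pos} says that, given \eqref{E:diss_condition}, \eqref{E:pos_condition} holds iff $\mathcal K>0$. Positivity of $\sigma_{\texttt e}\mathcal K$, the $(2,2)$ diagonal entry of $\matr M_{\texttt e}(\sigma_{\texttt e})\succ0$, together with $\sigma_{\texttt e}>0$ gives $\mathcal K>0$. Proposition~\ref{lem_pos} then makes $\widetilde H_{\texttt e}$ locally positive definite. The expansion in its proof, in the variables $(\nu+\sigma_{\texttt e}\delta,\delta,\zeta+\frac{\lambda}{L}\iota)$, which is an invertible linear change of $(\nu,\delta,\zeta)$, gives the quadratic sandwich
\[
c_2\|\vect x_{\texttt e}-\eqm{\vect x}_{\texttt e}[\varphi]\|^2\le \widetilde H_{\texttt e}\le c_3\|\vect x_{\texttt e}-\eqm{\vect x}_{\texttt e}[\varphi]\|^2 .
\]
The constants are uniform in $\varphi$ because $\|\eqm\xi[\varphi]\|=1$ and the coefficients of $\widetilde{\matr P}_{\texttt e}$ depend on $\varphi$ only through this unit phasor.

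The remaining step is to convert these bounds into stability. I would note that $\dot\varphi=\eqm\omega$ with $\varphi(0)$ free makes the reference a time-varying trajectory $\eqm{\vect x}_{\texttt e}[\eqm\omega t+\varphi_0]$. Equivalently, writing the SMIB in the frame rotating with $\eqm\xi[\varphi]$ turns the problem into an autonomous system with an equilibrium; this is the relative equilibrium noted in the introduction. In that frame $\widetilde H_{\texttt e}$ is a time-independent strict Lyapunov function. The standard quadratic-bound Lyapunov theorem then yields local exponential, hence asymptotic, stability of $\vect x_{\texttt e}\to\eqm{\vect x}_{\texttt e}[\varphi]$, uniformly in $\varphi_0$. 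Since the bus phase is fixed by $\varphi$, this is stability of the SMIB steady state itself.

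The main obstacle I expect is making the $O(\|\cdot\|^3)$ remainders uniform in $\varphi$, since they hide $\xi-\eqm\xi$ and angle-unwrapping terms. I would handle this by the rotating-frame change of variables $\xi\mapsto\eqm\xi[\varphi]^*\xi$, $I\mapsto\eqm\xi[\varphi]^*I$. In the new variables all $\varphi$-dependence disappears, so the remainders are remainders of smooth functions at a fixed point and are automatically uniform.
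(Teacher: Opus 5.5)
Your proposal is correct and follows the paper's proof. With $V-\eqm V[\varphi]=0$, both supply terms in \eqref{E:eb_raw2} vanish, and $\widetilde H_{\texttt e}$ then serves as a local Lyapunov function for the error dynamics. You also make explicit two steps the paper's one-line argument leaves implicit: that $\sigma_{\texttt e}>0$ together with the $(2,2)$ entry of $\matr M_{\texttt e}(\sigma_{\texttt e})\succ0$ gives $\mathcal K>0$, hence \eqref{E:pos_condition} via Remark~\ref{rem_pos}; and that passing to the frame rotating with $\eqm\xi[\varphi]$ turns the problem into an autonomous equilibrium with $\varphi$-uniform remainders.
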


\begin{proof}
Conditions \eqref{E:pos_condition} and \eqref{E:diss_condition} with
\begin{equation*}
    \vect u_{\texttt e} - \eqm{\vect u}_{\texttt e}[\varphi] = V - \eqm V[\varphi] = 0
\end{equation*}
imply that $\widetilde H_{\texttt e}(\vect x_{\texttt e}, \varphi; \sigma_{\texttt e})$ is a local Lyapunov function for the error dynamics in $\vect x_{\texttt e} - \eqm{\vect x}_{\texttt e}[\varphi]$, which completes the proof. 
\end{proof}

Section~\ref{sec_numerical} provides numerical evidence that \eqref{E:pos_condition} and \eqref{E:diss_condition} can be verified for realistic SG parameters and that they are mildly conservative for certifying SMIB stability.

\section{Stability Implications for Multimachine Systems} \label{sec_grid}

This section completes the power network model and shows that, with the proposed rotor-angle feedback, the synchronous orbit is asymptotically stable under suitable conditions.

\subsection{Modeling Assumptions} \label{sec_assum}

The modeling assumptions for the power system are listed below (numbering continues from previous assumptions):
\begin{enumerate}
    \item[A4] \emph{Balanced three phases:} The system dynamics are invariant under permutations of the three phases. 
    \item[A5] \emph{Transmission lines:} The transmission lines are modeled by the lumped-parameter $\Pi$-model~\cite{golo2002approximation}.
    \item[A6] \emph{Shunt capacitors:} Every non-ground node is connected to the ground node by a shunt capacitor of positive capacitance. The shunt capacitance may come from the $\Pi$-model of the transmission lines and can be arbitrarily close to zero.
    \item[A7] \emph{Ground node:} The ground node has zero potential.
    \item[A8] \emph{Monotone loads:} Each load in the system is modeled by a static monotone mapping $\Upsilon: \mathbb{C}\to \mathbb{C}$, which maps from the load voltage $V$ to the load current $I$ and satisfies the following monotonicity condition~\cite{strehle2020passivity}: 
    \begin{equation*}
        \left\langle \Upsilon(V_1) - \Upsilon(V_2),\, V_1 - V_2 \right\rangle \geq 0,\quad\, \forall V_1, V_2 \in \mathbb{C}.
    \end{equation*}
\end{enumerate}

Assumption A8 is imposed to make each static load edge incrementally passive. It covers passive admittance loads and more general monotone voltage-dependent loads. Constant-power loads are not covered in general, since their incremental admittance is negative which destroys global monotonicity~\cite{monshizadeh2019power,strehle2020passivity}.

\subsection{Solvable Interconnection Constraints} \label{sec_grid_model}

We adopt the port-Hamiltonian modeling approach in~\cite{fiaz2013port}.
The interconnection of the input--output dynamical elements is modeled as a directed graph with the set of nodes $\mathcal N$ and the set of edges $\mathcal E$. The node index is $\texttt n \in \mathcal N$; the edge index is $\texttt e \in \mathcal E$. Each edge is associated with an edge voltage $V_{\textup{\texttt e}}$ and an edge current $I_{\textup{\texttt e}}$. The sign convention for $V_{\textup{\texttt e}}$ and $I_{\textup{\texttt e}}$ is consistent with the direction of the edge: $V_{\textup{\texttt e}}$ is the potential drop from ``from'' to ``to'', and $I_{\textup{\texttt e}}$ is the current through the edge from ``from'' to ``to''. The ground node, being the last node, is always assigned as the ``to'' node.
The edges carry the grid elements, which include the input--output dynamics of the SGs, the $R$--$L$ lines, and the shunt capacitors. Table~\ref{tab_1} assigns the input and output for each edge type.

\begin{table}[t]
\renewcommand{\arraystretch}{1.2}
\caption{Input and Output of Each Edge}
\label{tab_1}
\setlength\tabcolsep{2pt}
\centering
\begin{tabular}{|c|c|c|c|}
\hline
Edge type &Edge index $\textup{\texttt e}$ &Input $\vect u_{\textup{\texttt e}}$ &Output $\vect y_{\textup{\texttt e}}$ \\
\hline
SG &$1,\ldots,\, \textup{\texttt g}$ &$V_{\textup{\texttt e}}$ &$I_{\textup{\texttt e}}$ \\
%\hline
$R$--$L$ line &$\textup{\texttt g} + 1,\ldots,\, \textup{\texttt g} + \textup{\texttt T}$ &$V_{\textup{\texttt e}}$ &$I_{\textup{\texttt e}}$ \\
%\hline
Shunt &$\textup{\texttt g} + \textup{\texttt T} + 1,\ldots,\, 2 \textup{\texttt g} + \textup{\texttt T} + \ell $ &$I_{\textup{\texttt e}}$ &$V_{\textup{\texttt e}}$ \\
\hline
\end{tabular}
\end{table}

From Assumption A5, the $\Pi$-model of a transmission line section consists of one $R$--$L$ line and two shunt capacitors. Table~\ref{tab_1} shows that there are $\textup{\texttt g}$ SGs, $\textup{\texttt T}$ $R$--$L$ lines, and $\textup{\texttt g} + \ell$ shunt capacitors with loads. By Assumption A6, there are $\textup{\texttt g}$ nodes that are connected to the ground by an SG and a shunt capacitor in parallel. There are $\ell$ nodes that are connected to the ground by a shunt capacitor only. The incidence matrix for the graph of the network can be written as
\begin{equation} \label{E:incidence}
    \matr N = \begin{bmatrix}
        \begin{bmatrix}
            \matr I_{\textup{\texttt g}} \\
            \matr 0_{\ell\times \textup{\texttt g}}
        \end{bmatrix} &\matr N_1 &\matr I_{\textup{\texttt g} + \ell}  \\[10pt]
        -\mathbbm{1}_{\textup{\texttt g}}^\tran &\matr 0_{\textup{\texttt T}}^\tran &-\mathbbm{1}_{\textup{\texttt g} + \ell}^\tran 
    \end{bmatrix},
\end{equation}
where $\matr N_1 \in \{-1, 0, 1\}^{(|\mathcal N| - 1) \times \texttt T}$ is the sub-incidence matrix of the subgraph obtained by removing the ground node~\cite{fiaz2013port}.

The edge voltages and currents are physically constrained by KVL and KCL according to the graph incidence. These constraints are written as
\begin{align*}
    &\matr N\, \col(I_{\textup{\texttt e}})_{\textup{\texttt e}=1}^{|\mathcal E|} = \matr 0_{|\mathcal N|}, \tag{KCL} \\
    &\matr N^\tran\, \col(E_{\texttt n})_{\texttt n=1}^{|\mathcal N|} = \col(V_{\textup{\texttt e}})_{\textup{\texttt e}=1}^{|\mathcal E|}, \tag{KVL}
\end{align*}
where $E_\texttt n$, $\texttt n=1,\ldots, |\mathcal N|$ are the nodal voltage potentials.

With the edge ordering in Table~\ref{tab_1}, KVL determines the voltage
inputs of the SG and $R$--$L$ line edges from the shunt-voltage outputs,
whereas KCL determines the current inputs of the shunt edges from
the SG and line-current outputs. Since every non-ground node has a
shunt capacitor by Assumption A6, these node voltages are state/output
variables. Therefore, the KCL and KVL constraints can be solved
explicitly as the linear mapping $\vect y \mapsto \vect u = \matr W \vect y$,
where $\vect y = \col(\vect y_{\textup{\texttt e}})_{\textup{\texttt e}=1}^{|\mathcal E|}$, $\vect u = \col(\vect u_{\textup{\texttt e}})_{\textup{\texttt e}=1}^{|\mathcal E|}$, and
\begin{equation} \label{E:W}
    \matr W = \begin{bmatrix}
        \matr 0_{\textup{\texttt g}\times \textup{\texttt g}} &\matr 0_{\textup{\texttt g}\times \textup{\texttt T}} &\begin{bmatrix}
            \matr I_{\textup{\texttt g}} &\matr 0_{\textup{\texttt g}\times \ell}
        \end{bmatrix} \\[5pt]
        \matr 0_{\textup{\texttt T}\times \textup{\texttt g}} &\matr 0_{\textup{\texttt T}\times \textup{\texttt T}} &\matr N_1^\tran \\[5pt]
        \begin{bmatrix}
            -\matr I_{\textup{\texttt g}} \\
            \matr 0_{\ell\times \textup{\texttt g}}
        \end{bmatrix} &-\matr N_1 &\matr 0_{\textup{\texttt g} + \ell}
    \end{bmatrix}.
\end{equation}
Since $\matr W = -\matr W^\tran = -\matr W^*$, we have
\begin{align}
    &\quad\, \sum_{\texttt e=1}^{|\mathcal E|} \left\langle \vect y_{\texttt e} - \eqm{\vect y}_{\texttt e}[\varphi], \vect u_{\texttt e} - \eqm{\vect u}_{\texttt e}[\varphi] \right\rangle = \left\langle \vect y - \eqm{\vect y}[\varphi], \vect u - \eqm{\vect u}[\varphi] \right\rangle 
    \notag \\
    &= \left\langle \vect y - \eqm{\vect y}[\varphi], \matr W(\vect y - \eqm{\vect y}[\varphi]) \right\rangle = 0. \label{E:cancellation}
\end{align}
That is, the sum of the supply rates of every edge cancels.

\begin{remark}
We eliminated the algebraic variables $E_\texttt n,\, \texttt n=1,\ldots, |\mathcal N|$, that are not associated with energy storage, but are necessary to write down KVL. This elimination resembles the time-domain Kron reduction~\cite{caliskan2014towards,singh2022time,kettner2017properties}. The feasibility condition here is embedded in the feedback inputs and outputs assigned for each edge, without requiring an additional algebraic condition.
\end{remark}

\subsection{Energy Balance for Each Edge}

\subsubsection{Synchronous Generator}

The augmented generator storage
$\widetilde H_{\texttt e}(\vect x_{\texttt e}, \varphi; \sigma_{\texttt e})$ is locally positive definite with respect to $\vect x_{\texttt e} = \eqm{\vect x}_{\texttt e}[\varphi]$ by Proposition~\ref{lem_pos}, and the
second-order part of its time derivative is given by Proposition~\ref{lem_diss}.

\subsubsection{R--L Line} \label{sec_RL}
The equation for each $R$--$L$ line is
\begin{equation} \label{E:R-L}
    L_\textup{\texttt e} \dot I_\textup{\texttt e} = -R_\textup{\texttt e} I_\textup{\texttt e} + V_\textup{\texttt e},
\end{equation}
where $R_{\textup{\texttt e}}> 0$ and $ L_{\textup{\texttt e}} > 0$ are the resistance and inductance. Let $\vect x_{\texttt e} \coloneqq I_{\texttt e}$ and 
$H_{\texttt e}(\vect x_{\texttt e}, \varphi) \coloneqq \frac{1}{2} L_{\texttt e} \|I_{\texttt e} - \eqm I_{\texttt e}[\varphi] \|^2$.
We can check that along the flow of the extended dynamics consisting of \eqref{E:R-L} and $\dot\varphi = \eqm\omega$,
\begin{align}
    \dot H_{\texttt e} (\vect x_{\texttt e}, \varphi) 
    &= -R_{\texttt e}\|I_{\texttt e} - \eqm I_{\texttt e}[\varphi]\|^2 \notag \\
    &\quad\, + \left\langle \vect y_{\texttt e} - \eqm{\vect y}_{\texttt e}[\varphi], \vect u_{\texttt e} - \eqm{\vect u}_{\texttt e}[\varphi] \right\rangle. \label{E:pass2}
\end{align}

\subsubsection{Shunt Capacitor in Parallel with a Monotone Load and Rotor-Angle Feedback} \label{sec_shunt}

The equation for each shunt capacitor with load and injected currents is
\begin{equation} \label{E:shunt}
    C_\textup{\texttt e} \dot V_\textup{\texttt e} = -G_{\textup{\texttt e}} V_{\textup{\texttt e}} + \gamma_{\texttt e} - \Upsilon_\textup{\texttt e}(V_\textup{\texttt e}) + I_\textup{\texttt e},
\end{equation}
where $C_{\textup{\texttt e}}> 0$ and $G_{\textup{\texttt e}} > 0$ are the capacitance and conductance, and $\Upsilon_{\textup{\texttt e}}(V_{\textup{\texttt e}})$ is the current drawn by the static load. The proposed rotor-angle feedback $\gamma_{\texttt e}$ is defined as
\begin{align}
    &\gamma_{\texttt e} \coloneqq \notag \\
    &\begin{cases}
        \displaystyle -\frac{\lambda_{\texttt i}}{L_{\texttt i}} \left(\xi_{\texttt i} - \frac{\eqm\xi_{\texttt i}[0]}{\eqm\xi_1[0]} \xi_1 \right), &\texttt e= \texttt i + \texttt g+\texttt T,\, \texttt i = 1,\ldots, \texttt g, \\
        0, &\texttt e = 2\texttt g+\texttt T +1,\ldots, 2\texttt g+ \texttt T +\ell.
    \end{cases} \label{E:feedback}
\end{align}
By the incidence matrix $\matr N$ in \eqref{E:incidence}, $\xi_{\texttt i}$ is the rotor angle of the SG adjacent to the shunt capacitor, and $\xi_1$ is the rotor angle of the first SG. In the steady state, we can verify that $\gamma_{\texttt e} = 0$. Hence $\gamma_{\texttt e}$ does not change the steady state, cf. Assumption~\ref{assum_exist}.

\begin{remark}
The rotor-angle feedback $\gamma_{\texttt e}$ resembles consensus algorithms in the synchronization of multi-agent systems. To implement $\gamma_{\texttt e}$, it requires knowledge of $\eqm\xi_{\texttt i}[0]/\eqm\xi_1[0]$, i.e., the steady-state rotor angle difference between every SG in the system. Approximate steady-state rotor angle differences are typically obtained from power flow calculation. 
For the sake of the analysis, we assume that the precise steady-state rotor angle differences are known, cf. \cite{colombino2019global,gross2019effect}. 
\end{remark}

Let $\vect x_{\texttt e} \coloneqq V_{\texttt e}$ and $H_{\texttt e}(\vect x_{\texttt e}, \varphi) \coloneqq \frac{1}{2} C_{\texttt e}\|V_{\texttt e} - \eqm V_{\texttt e}[\varphi]\|^2$.
Along the flow of the extended dynamics consisting of \eqref{E:shunt} and $\dot\varphi = \eqm\omega$, we have, for $\texttt e = \texttt i + \texttt g+\texttt T$ with $\texttt i = 1,\ldots,\texttt g$,
\begin{align}
    &\quad\,\, \dot H_{\texttt e} (\vect x_{\texttt e}, \varphi) \notag \\
    &= -G_{\texttt e} \|V_{\texttt e} - \eqm V_{\texttt e}[\varphi]\|^2 - \frac{\lambda_{\texttt i}}{L_{\texttt i}} \left\langle V_{\texttt e} - \eqm V_{\texttt e}[\varphi], \xi_{\texttt i} - \eqm\xi_{\texttt i}[\varphi] \right\rangle \notag \\
    &\quad\, + \frac{\lambda_{\texttt i}}{L_{\texttt i}} \left\langle V_{\texttt e} - \eqm V_{\texttt e}[\varphi], \frac{\eqm\xi_{\texttt i}[0]}{\eqm \xi_1[0]} (\xi_1 - \eqm\xi_1[\varphi])\right\rangle \notag \\
    &\quad\, + \left\langle \vect y_{\texttt e} - \eqm{\vect y}_{\texttt e}[\varphi], \vect u_{\texttt e} - \eqm{\vect u}_{\texttt e}[\varphi] \right\rangle \notag \\
    &\quad\, - \left\langle V_{\texttt e} - \eqm V_{\texttt e}[\varphi], \Upsilon_{\texttt e}(V_{\texttt e}) - \Upsilon_{\texttt e}(\eqm V_{\texttt e}[\varphi] ) \right\rangle \notag \\
    &\leq -G_{\texttt e} \|V_{\texttt e} - \eqm V_{\texttt e}[\varphi]\|^2 - \frac{\lambda_{\texttt i}}{L_{\texttt i}} \left\langle \vect y_{\texttt e} - \eqm{\vect y}_{\texttt e}[\varphi], \xi_{\texttt i} - \eqm\xi_{\texttt i}[\varphi] \right\rangle \notag \\
    &\quad\, + \frac{\lambda_{\texttt i}}{L_{\texttt i}} \left\langle V_{\texttt e} - \eqm V_{\texttt e}[\varphi], \frac{\eqm\xi_{\texttt i}[0]}{\eqm \xi_1[0]} (\xi_1 - \eqm\xi_1[\varphi])\right\rangle \notag \\
    &\quad\, + \left\langle \vect y_{\texttt e} - \eqm{\vect y}_{\texttt e}[\varphi], \vect u_{\texttt e} - \eqm{\vect u}_{\texttt e}[\varphi] \right\rangle. \label{E:pass3}
\end{align}
where we used Assumption A8 to remove the load contribution, and, for $\texttt e = 2\texttt g+\texttt T+1,\ldots, 2\texttt g+\texttt T+\ell$,
\begin{align}
    \dot H_{\texttt e} (\vect x_{\texttt e}, \varphi)
    &\leq -G_{\texttt e}\|V_{\texttt e} - \eqm V_{\texttt e}[\varphi]\|^2 \notag \\
    &\quad\, + \left\langle \vect y_{\texttt e} - \eqm{\vect y}_{\texttt e}[\varphi], \vect u_{\texttt e} - \eqm{\vect u}_{\texttt e}[\varphi] \right\rangle. \label{E:pass4}
\end{align}

\subsection{Network Stability Result}

\begin{proposition} \label{prop_main}
Consider the power system model \eqref{E:system} with the rotor-angle feedback \eqref{E:feedback}. Assume Assumptions 1--3, A4--A8, and assume that, for each SG with index $\texttt e =1,\ldots, \texttt g$, there exists $\sigma_{\texttt e} > 0$ that verifies the local shifted dissipativity condition \eqref{E:diss_condition}. In addition, assume
\begin{equation}
\begin{aligned}
&
\frac{1}{
d_1
}
-
\sum_{\texttt i=1}^{\texttt g}
\frac{(\lambda_{\texttt i}/L_{\texttt i})^2}{4 G_{\texttt i+\texttt g+\texttt T}}
+
\frac{\left(n_1/
d_1
-
q
\right)^2
}{
\displaystyle
\sum_{\texttt e=1}^{\texttt g}
\vect b_{\texttt e}^\tran \matr M_{\texttt e}(\sigma_{\texttt e})^{-1}\vect b_{\texttt e}
-
\frac{
n_1^2}{
d_1
} + \tau
}
>0,
\end{aligned}
\label{eq:phase-dynamics-condition}
\end{equation}
where
\begin{align*} 
&q\coloneqq \sum_{\texttt i=1}^{\texttt g}
\frac{\lambda_{\texttt i} C_{\texttt i+\texttt g+\texttt T}}
     {2L_{\texttt i}G_{\texttt i+\texttt g+\texttt T}}
\left\langle\bar\xi_{\texttt i}[0],\bar V_{\texttt i+\texttt g+\texttt T}[0]\right\rangle , \notag \\
&\tau \coloneqq \sum_{\texttt e=1}^{\texttt g}
\frac{
L_{\texttt e}^2
\left\langle\bar I_{\texttt e}[0],j\bar\xi_{\texttt e}[0] \right\rangle^2
}{R_{\texttt e}}
+
\sum_{\texttt e=\texttt g+1}^{\texttt g+\texttt T}
\frac{L_{\texttt e}^2\|\bar I_{\texttt e}[0]\|^2}{R_{\texttt e}} \notag \\
&\qquad\:\, +
\sum_{\texttt e=\texttt g+\texttt T+1}^{2\texttt g+\texttt T+\ell}
\frac{C_{\texttt e}^2\|\bar V_{\texttt e}[0]\|^2}{G_{\texttt e}}, \notag \\
&d_{\texttt e}\coloneqq\vect e_2^\tran \matr M_{\texttt e}(\sigma_{\texttt e})^{-1} \vect e_2,
\quad
n_{\texttt e}\coloneqq\vect e_2^\tran \matr M_{\texttt e}(\sigma_{\texttt e})^{-1} \vect b_{\texttt e},
\\
&\vect e_2\coloneqq \begin{bmatrix}
0 &1 &0 \end{bmatrix}^\tran,\quad \vect b_{\texttt e}
\coloneqq \notag \\
&
\begin{bmatrix}
-\sigma_{\texttt e}J_{\texttt e} &
-\displaystyle \left(
\sigma_{\texttt e}D_{\texttt e}+2\mathcal K_{\texttt e}+\frac{\lambda_{\texttt e}^2}{L_{\texttt e}}
\right) &
\displaystyle \frac{L_{\texttt e}}{\lambda_{\texttt e}}\mathcal K_{\texttt e}+\lambda_{\texttt e}
\end{bmatrix}^\tran. 
\end{align*}
Then the synchronous orbit \(\Gamma\) is asymptotically stable.
\end{proposition}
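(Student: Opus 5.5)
We extend the state by the phase variable $\varphi$, but we replace the open-loop phase dynamics $\dot\varphi=\eqm\omega$ by dissipative dynamics $\dot\varphi=\eqm\omega+\kappa\,\psi$. Here $\kappa>0$ is a gain and $\psi$ is a scalar mismatch built from the rotor-angle error of the first SG, e.g.\ $\psi=\langle j\eqm\xi_1[\varphi],\xi_1-\eqm\xi_1[\varphi]\rangle\approx \theta_1-\eqm\theta_1[\varphi]$. These dynamics are auxiliary: they serve only the analysis and never feed back into the physical state. The plan is to show that the extended set $\Gamma_{\mathrm{ext}}=\{(\eqm{\vect x}[\varphi],\varphi)\}$ is asymptotically stable for the extended flow. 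Since the projection of $\Gamma_{\mathrm{ext}}$ onto $\mathbb{X}$ is $\Gamma$, and $\dist_{\mathbb{X}}(\vect x,\Gamma)\le\|\vect x-\eqm{\vect x}[\varphi]\|$, this yields uniform stability and attractivity of $\Gamma$ in the sense of Definitions~\ref{defn_stability1} and~\ref{defn_stability2}. The converse direction needed for uniform stability, namely a small $\delta$-neighbourhood of $\Gamma$ in $\mathbb{X}$, is handled by choosing the initial $\varphi(0)$ as the nearest-point phase.

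The candidate Lyapunov function is
\[
W=\sum_{\texttt e=1}^{\texttt g}\widetilde H_{\texttt e}(\vect x_{\texttt e},\varphi;\sigma_{\texttt e})+\sum_{\texttt e=\texttt g+1}^{2\texttt g+\texttt T+\ell}H_{\texttt e}(\vect x_{\texttt e},\varphi).
\]
It is locally positive definite with respect to $\vect x=\eqm{\vect x}[\varphi]$: the SG terms are covered by Proposition~\ref{lem_pos} and Remark~\ref{rem_pos}, since $\sigma_{\texttt e}>0$ together with \eqref{E:diss_condition} forces $\mathcal K_{\texttt e}>0$, and the line and shunt terms are quadratic.

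Next we compute $\dot W$ by adding \eqref{E:eb_raw2}, \eqref{E:pass2}, \eqref{E:pass3} and \eqref{E:pass4}. By \eqref{E:cancellation}, all the supply terms $\langle\vect y_{\texttt e}-\eqm{\vect y}_{\texttt e},\vect u_{\texttt e}-\eqm{\vect u}_{\texttt e}\rangle$ cancel. The SG cross term $\frac{\lambda_{\texttt i}}{L_{\texttt i}}\langle\xi_{\texttt i}-\eqm\xi_{\texttt i},V_{\texttt i}-\eqm V_{\texttt i}\rangle$ is cancelled by the $-\frac{\lambda_{\texttt i}}{L_{\texttt i}}\langle V-\eqm V,\xi_{\texttt i}-\eqm\xi_{\texttt i}\rangle$ term produced by the feedback in \eqref{E:pass3}; by KVL the SG terminal voltage equals the adjacent shunt voltage. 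This is exactly what the consensus feedback \eqref{E:feedback} is designed to do. What remains of the distributed angle-modulated sources is the single channel $\sum_{\texttt i}\frac{\lambda_{\texttt i}}{L_{\texttt i}}\langle V_{\texttt i+\texttt g+\texttt T}-\eqm V,\frac{\eqm\xi_{\texttt i}[0]}{\eqm\xi_1[0]}(\xi_1-\eqm\xi_1)\rangle$, which to first order is proportional to $(\theta_1-\eqm\theta_1)$.

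Because $\dot\varphi\neq\eqm\omega$ now, every storage term also picks up a contribution $-\kappa\psi\,\langle\nabla_{\varphi}\text{-terms}\rangle$. For instance, the line term becomes $-\kappa\psi\langle L_{\texttt e}(I_{\texttt e}-\eqm I_{\texttt e}),j\eqm I_{\texttt e}\rangle$, using $\partial_\varphi\eqm I[\varphi]=j\eqm I[\varphi]$ for single-frequency sinusoids. The SG contribution takes the form $\psi\,\vect b_{\texttt e}^\tran(\nu,\delta,\cdot)$, which is where the vectors $\vect b_{\texttt e}$ come from, and the shunt contribution gives $q$. The result is a quadratic form in $(\nu_{\texttt e},\delta_{\texttt e},\zeta_{\texttt e})$, $I_{\texttt e}-\eqm I_{\texttt e}$, $V_{\texttt e}-\eqm V_{\texttt e}$ and the phase mismatch, plus $O(\|\cdot\|^3)$ remainders. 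Its block-diagonal part is $\matr M_{\texttt e}(\sigma_{\texttt e})$, $R_{\texttt e}$ and $G_{\texttt e}$, with off-diagonal couplings through the single scalar channel.

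The main obstacle is proving that this aggregated quadratic form is negative definite. The mismatch $\psi$ coincides with the $\delta_1$ coordinate, up to how $\delta_{\texttt e}$ is measured relative to $\varphi$, so the form has rank-one couplings to the $\vect e_2$ direction. The approach is to minimize successively over the non-phase blocks (Schur complements) using the identities $\min_z\{z^\tran\matr M z-2\psi\vect b^\tran z\}=-\psi^2\vect b^\tran\matr M^{-1}\vect b$ and similarly for $R_{\texttt e}$ and $G_{\texttt e}$. Completing squares with respect to the shunt voltages yields $-\sum(\lambda_{\texttt i}/L_{\texttt i})^2/(4G)$, the line and flux terms yield $\tau$, and the $\vect e_2$ entries yield $d_1,n_1$. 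The scalar $\kappa$ is then optimized, and the leftover $2\times2$ determinant condition should reduce exactly to \eqref{eq:phase-dynamics-condition}. Careful bookkeeping of the $\vect e_2$ coupling, which produces $1/d_1$, $n_1^2/d_1$ and the cross term $(n_1/d_1-q)^2$, is the delicate part.

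Once negative definiteness is established, the quadratic part dominates the cubic remainders in a neighbourhood of $\Gamma_{\mathrm{ext}}$, uniformly in $\varphi$ by compactness of $\mathbb{T}$. Hence $\dot W\le -c\,\|\vect x-\eqm{\vect x}[\varphi]\|^2\le -c' W$ locally, which gives stability and convergence of $\vect x(t)-\eqm{\vect x}[\varphi(t)]\to0$, and therefore of $\dist(\vect x(t),\Gamma)\to0$.
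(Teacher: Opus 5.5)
Your argument is correct, but it departs from the paper at the one step that matters: the choice of auxiliary phase dynamics. The paper uses the gradient law \eqref{E:gauge}. This adds the negative semidefinite square $-c\,\mathcal Z(\hat{\vect z})^2$ to \eqref{E:last_balance}, where $\mathcal Z$ is the linear part of $\partial\widetilde H/\partial\hat\varphi$. The Schur complement $\mathcal F(c)$ onto $\hat\delta_1$ is then nondecreasing in $c$, and the left-hand side of \eqref{eq:phase-dynamics-condition} is its limit as $c\to\infty$.

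Your law $\dot\varphi=\eqm\omega+\kappa\sin\delta_1$ instead adds the sign-indefinite bilinear term $\kappa\,\delta_1\mathcal Z(\vect z)$. Perform the same eliminations: the shunt shift, the $\vect e_2$-splitting of the first SG block, and full elimination of the other SG blocks and of all currents and shunt voltages. The last step is then not a $2\times2$ determinant but the scalar concave quadratic
\[
\frac{1}{d_1}-\sum_{\texttt i=1}^{\texttt g}\frac{(\lambda_{\texttt i}/L_{\texttt i})^2}{4G_{\texttt i+\texttt g+\texttt T}}-\kappa\Big(\frac{n_1}{d_1}-q\Big)-\frac{\kappa^2}{4}\,\Theta,
\]
where $\Theta$ is the denominator in \eqref{eq:phase-dynamics-condition}. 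Its maximum over $\kappa$ is attained at $\kappa^\star=-2(n_1/d_1-q)/\Theta$ and equals $\lim_{c\to\infty}\mathcal F(c)$. This is exactly the left-hand side of \eqref{eq:phase-dynamics-condition}, so your hedged claim does hold. Both certificates amount to positivity of the phase-free quadratic form on $\ker\mathcal Z$.

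Your restriction $\kappa>0$ costs nothing. Along the orbit-tangent direction $\vect x=\eqm{\vect x}[\varphi+\Delta]$, the phase-free form is nonpositive, because the storage is constant along a phase-shifted periodic solution when $\dot\varphi=\eqm\omega$. There $\delta_1\mathcal Z<0$, so every admissible gain is positive, and the condition forces $n_1/d_1-q<0$.

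The trade-offs are as follows. The paper's square term means any sufficiently large $c$ works and no sign bookkeeping is needed. Your bilinear term gives an explicit finite gain, but only a bounded interval of admissible gains, since a large $\kappa$ destroys definiteness. In exchange, your phase is driven by the first rotor angle alone, which mirrors the anchoring at $\xi_1$ in \eqref{E:feedback}. You also spell out the passage from $\Gamma_{\mathrm{ext}}$ to $\Gamma$, which the paper states in one line.
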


\begin{proof}
Define the error coordinates
\begin{align*}
&\delta_{\texttt e}
\coloneqq
\arg(\eqm\xi_{\texttt e}[\varphi]^*\xi_{\texttt e}), &&\texttt e= 1,\ldots,\texttt g, \\
&\nu_{\texttt e}\coloneqq\omega_{\texttt e}-\eqm\omega, &&\texttt e = 1,\ldots,\texttt g, \\
&\zeta_{\texttt e} \coloneqq I_{\texttt e}-\eqm I_{\texttt e}[\varphi], &&\texttt e = 1,\ldots,\texttt g+\texttt T, \\
&\eta_{\texttt e}\coloneqq V_{\texttt e}-\eqm V_{\texttt e}[\varphi], &&\texttt e = \texttt g+\texttt T+1,\ldots,2\texttt g+\texttt T+\ell,
\end{align*}
where the branch of \(\arg(\cdot)\) is chosen locally near zero.
In vector form, let
$\boldsymbol{\delta} = \col(\delta_{\texttt e})_{\texttt e=1}^{\texttt g}, \, \boldsymbol{\nu} = \col(\nu_{\texttt e})_{\texttt e=1}^{\texttt g},\, \boldsymbol{\zeta} = \col(\zeta_{\texttt e})_{\texttt e=1}^{\texttt g+\texttt T},\, \boldsymbol{\eta} = \col(\eta_{\texttt e})_{\texttt e=\texttt g+\texttt T+1}^{2\texttt g+\texttt T+\ell}
$, and 
$\vect z \coloneqq \col(\boldsymbol{\nu},\boldsymbol{\delta},\boldsymbol{\zeta},\boldsymbol{\eta})$.
Define the extended orbit
\begin{equation*}
    \Gamma_{\mathrm{ext}} \coloneqq \left\{(\eqm{\vect x}[\varphi], \varphi)\mid \varphi\in \mathbb{T} \right\}.
\end{equation*}

Define the aggregated network storage
\begin{align} \label{E:aggregate}
\widetilde H(\vect x, \varphi)
&\coloneqq
\sum_{\texttt e=1}^{\texttt g} \widetilde H_{\texttt e}(\vect x_{\texttt e},\varphi; \sigma_{\texttt e}) + \sum_{\texttt e=\texttt g+1}^{2\texttt g+\texttt T+\ell} H_{\texttt e}(\vect x_{\texttt e},\varphi).
\end{align}
By Proposition~\ref{lem_pos}, Remark~\ref{rem_pos}, $L_{\texttt e} > 0$, and $C_{\texttt e} > 0$, we have that $\widetilde H(\vect x, \varphi)$ is locally positive definite with respect to $\Gamma_{\mathrm{ext}}$ in a small neighborhood $U_0 \subset \mathbb{X}\times \mathbb{T}$.
Combining the edge balance equations \eqref{E:eb_raw2}, \eqref{E:pass2}, \eqref{E:pass3}, and \eqref{E:pass4}, and using the interconnection constraint \eqref{E:cancellation}, we obtain, along the flow of the extended dynamics consisting of \eqref{E:system} and $\dot\varphi = \eqm\omega$,
\begin{align}
&\quad\,\, \dot{\widetilde H}(\vect x, \varphi) \notag \\
&\leq
-\widetilde{\mathcal Q}(\vect z)
+ \sum_{\texttt i=1}^{\texttt g} \frac{\lambda_{\texttt i}}{L_{\texttt i}} \delta_1 \left\langle\eta_{\texttt i + \texttt g+\texttt T}, j \eqm\xi_{\texttt i}[\varphi] \right\rangle + O\!\left(\|\vect z\|^3 \right),
\label{E:Htilde-dot}
\end{align}
where
\begin{align}
&\quad\,\, \widetilde{\mathcal Q}(\vect z) \notag \\
&=
\sum_{\texttt e=1}^{\texttt g}
\widetilde{\mathcal Q}_{\texttt e}(\nu_{\texttt e}, \delta_{\texttt e}, \zeta_{\texttt e}) +
\sum_{\texttt e=\texttt g+1}^{\texttt g+\texttt T}R_{\texttt e}\|\zeta_{\texttt e}\|^2 + \sum_{\texttt e=\texttt g+\texttt T+1}^{2\texttt g+\texttt T+\ell} G_{\texttt e} \|\eta_{\texttt e}\|^2. \notag %\label{E:Qtilde2-expanded}
\end{align}
By Proposition~\ref{lem_diss}, \(R_{\texttt e}>0\), and \(G_{\texttt e}>0\), the quadratic form $\widetilde{\mathcal Q}(\vect z)$ is positive definite. 

Now consider the phase dynamics, for some $c > 0$,
\begin{equation} \label{E:gauge}
    \dot{\hat\varphi} = \eqm\omega - c \frac{\partial \widetilde H}{\partial \hat \varphi} (\vect x, \hat\varphi),
\end{equation}
which is to replace the original phase dynamics $\dot\varphi = \eqm\omega$. Along the flow of the extended dynamics consisting of \eqref{E:system} and \eqref{E:gauge},
\begin{align}
    &\quad\,\, \dot{\widetilde H}(\vect x, \hat\varphi) \notag \\
    &= \left.\dot{\widetilde H}(\vect x, \varphi) \right|_{(\vect x, \varphi) = (\vect x, \hat\varphi)} + \frac{\partial \widetilde H}{\partial \hat \varphi} (\vect x, \hat\varphi) (\dot{\hat\varphi} - \dot\varphi)\notag \\
    &\leq -\widetilde{\mathcal Q}(\hat{\vect z}) + \sum_{\texttt i=1}^{\texttt g} \frac{\lambda_{\texttt i}}{L_{\texttt i}} \hat\delta_1 \left\langle \hat\eta_{\texttt i + \texttt g+\texttt T}, j \eqm\xi_{\texttt i}[\hat\varphi] \right\rangle - c \left[ \frac{\partial \widetilde H}{\partial \hat \varphi} (\vect x, \hat\varphi) \right]^2 \notag \\
    &\quad\, + O\!\left(\|\hat{\vect z}\|^3 \right), \label{E:last_balance}
\end{align}
where the error coordinates $\hat{\vect z}$ are referenced to $\hat\varphi$ which has the dynamics \eqref{E:gauge}.

\begin{lemma}
\label{lem:phase-dynamics}
Suppose \(\matr M_{\texttt e}(\sigma_{\texttt e})\succ0\) for every \(\texttt e=1,\ldots, \texttt g\). 
Then, there exists a finite \(c>0\) such that the quadratic part of
\eqref{E:last_balance} is negative definite if and only if \eqref{eq:phase-dynamics-condition} holds.
\end{lemma}

The proof of Lemma~\ref{lem:phase-dynamics} is given in Appendix~\ref{proof_phase_dynamics}.

By Lemma~\ref{lem:phase-dynamics}, there exists a small neighborhood $U_1 \subset U_0$ of $\Gamma_{\mathrm{ext}}$ in which the negative-definite quadratic part of \eqref{E:last_balance} dominates the cubic remainder. Hence $\dot{\widetilde H}(\vect x, \hat\varphi)$ is negative definite with respect to $\Gamma_{\mathrm{ext}}$.
By~\cite[Theorem~2.7.1]{bhatia2006dynamical}, we conclude that $\Gamma_{\mathrm{ext}}$ is asymptotically stable. This implies that $\Gamma$ is asymptotically stable.
\end{proof}

\section{Numerical Examples}
\label{sec_numerical}

We illustrate the feasibility of the shifted dissipativity
conditions
\eqref{E:diss_condition} with the constraint $\sigma > 0$ and verify  Corollary~\ref{cor} for SMIB stability.
The SG parameters are adapted from
\cite{anderson2004power}.
The nominal parameters
are listed in Table~\ref{tab:sync-params}. For each value of a
swept parameter, the two branches of synchronous steady states are
computed from the steady-state generator equations. Condition
\eqref{E:diss_condition} is then checked separately on each
branch. 
For a branch satisfying~\eqref{E:diss_condition}, define the
certificate margin
\begin{equation}
    m\coloneqq 
    \max_{\sigma > 0}
    \lambda_{\min}\!\left(\matr M(\sigma)\right).
    \label{eq:certificate-margin}
\end{equation}
Thus, \(m>0\) implies that there exists a value of \(\sigma\) for
which \(\matr M(\sigma)\succ0\), and hence that the branch admits a valid
local shifted dissipativity certificate.

\begin{table}[!t]
\centering
\caption{Synchronous Machine Parameters}
\label{tab:sync-params}
\begin{tabular}{|l|c|l|c|}
\hline
Parameter & Value [p.u.] & Parameter & Value [p.u.] \\
\hline
$\eqm\omega$ & $1$ & $J$ & $3.53$ \\
$R$ & $0.0211$ & $\lambda$ & $0.711$ \\
$L$ & $2.1$ & $\eqm V$ & $1$ \\
$D$ & $19$ &$\eqm P$ &$-0.2$ \\
%Reactive power $\eqm Q$ & $0.15$ \\
\hline
\end{tabular}
\end{table}

\begin{figure*}[!t]
\centerline{

\includegraphics[width=0.672\columnwidth,valign=b]{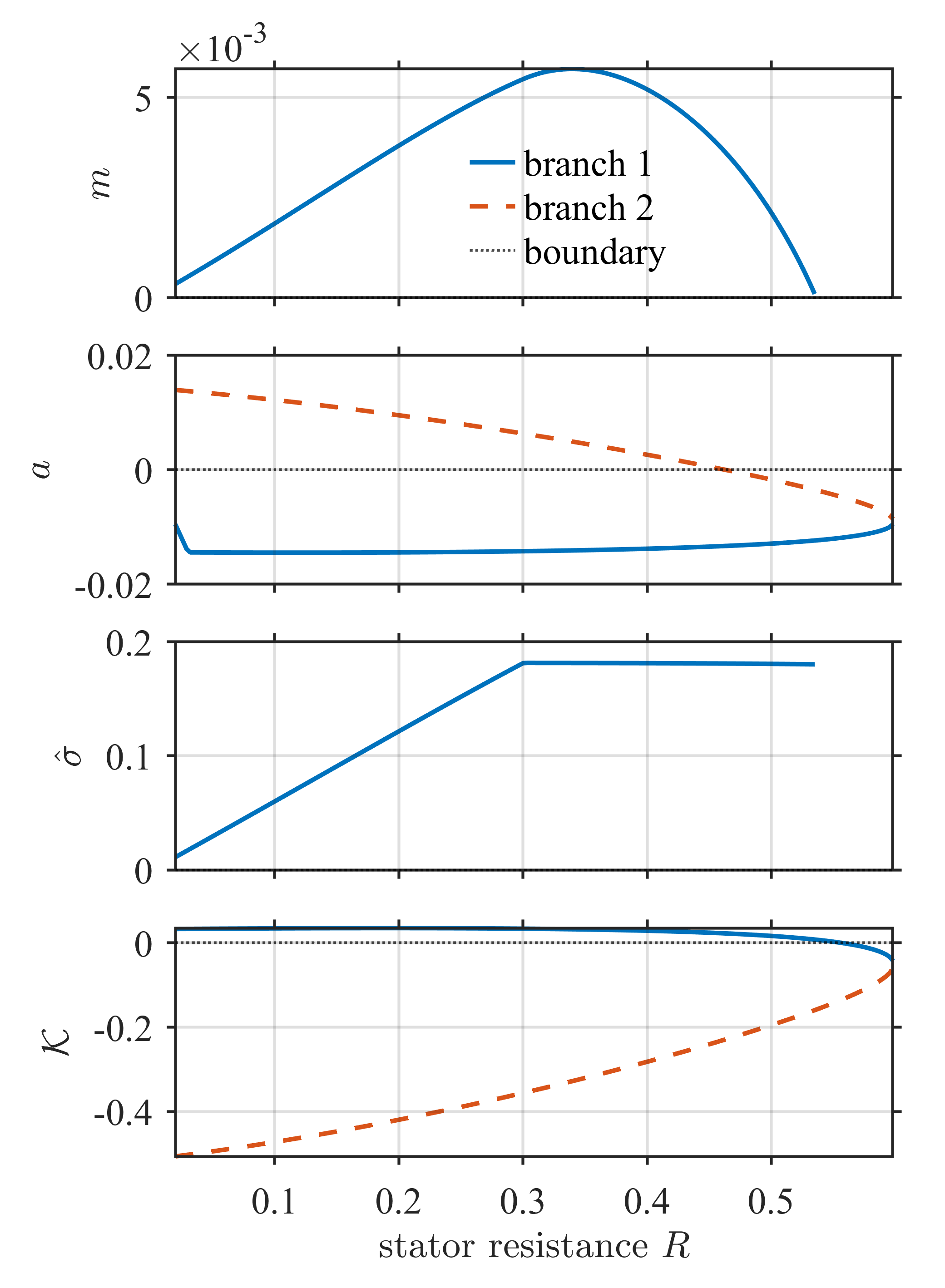}
\includegraphics[width=0.659\columnwidth,valign=b]{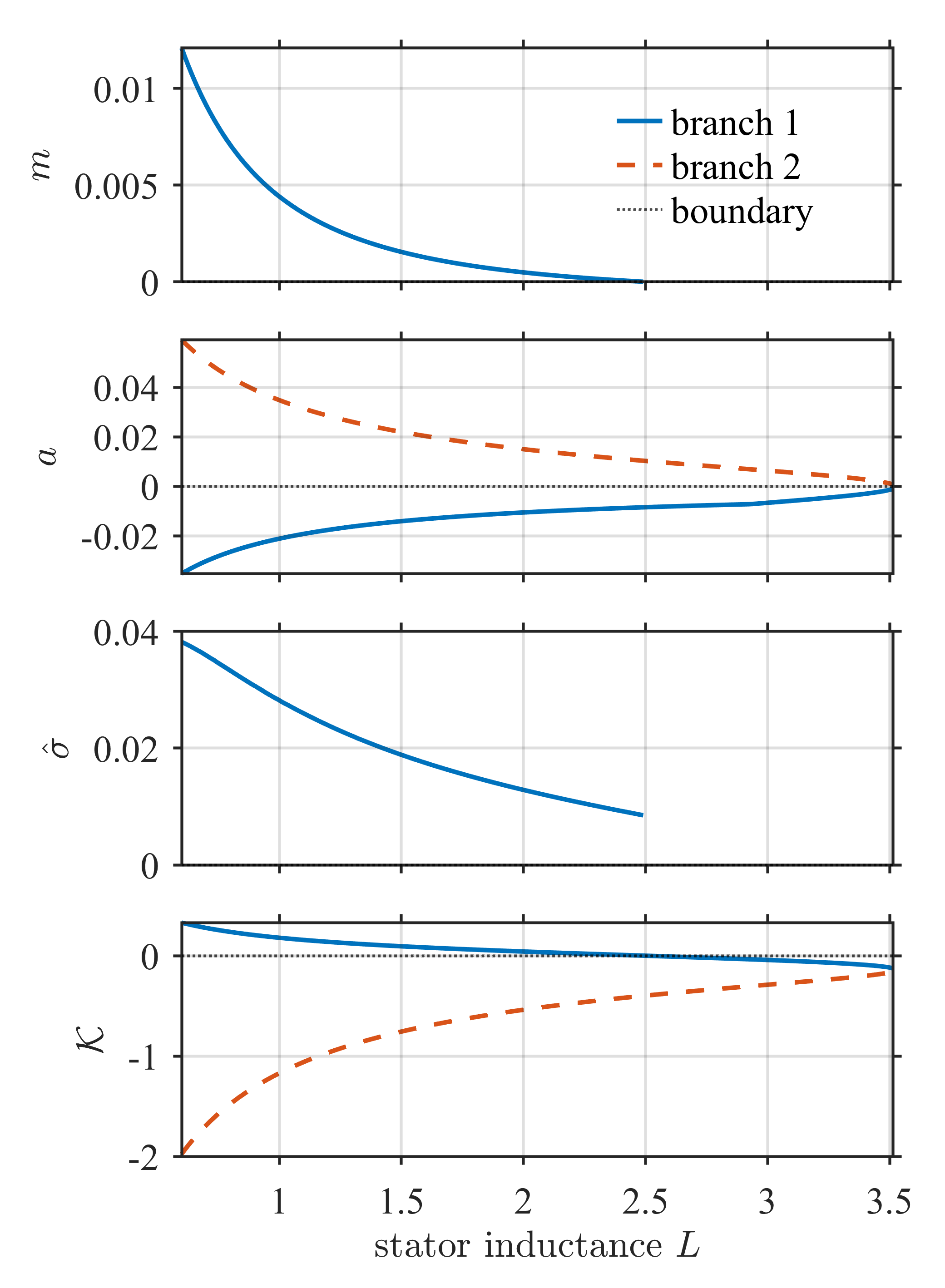}
\includegraphics[width=0.67\columnwidth,valign=b]{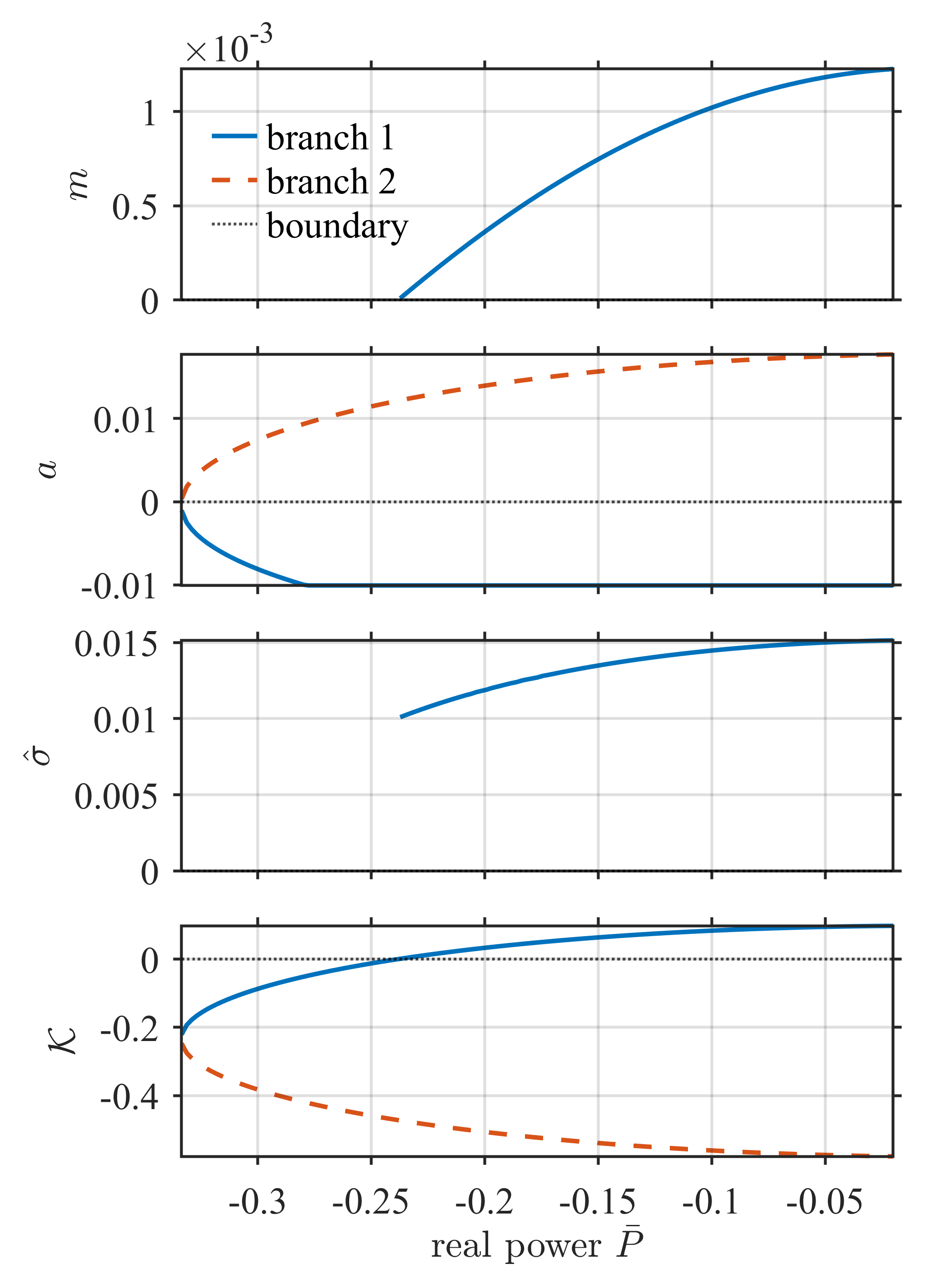}
}
\caption{Feasibility study for the proposed shifted dissipativity conditions
    under one-parameter sweeps of {\unboldmath \(R\)}, {\unboldmath \(L\)}, and {\unboldmath \(\bar P\)}.
    Each column corresponds to one swept parameter. From top to
    bottom, the rows show the optimized certificate margin {\unboldmath \(m\)},
    the spectral abscissa of the SMIB linearization {\unboldmath \(a\)}, the
    certifying optimizer {\unboldmath \(\hat\sigma\)}, and
    {\unboldmath \(\mathcal K=\lambda\langle
    \bar I[\varphi],\bar\xi[\varphi]\rangle\)}.
    Certificate margin and optimizer curves are shown only where
    {\unboldmath \(m>0\)}.}
\label{fig:certificate_sweeps}
\end{figure*}

To validate the dissipativity certificate, we linearize the SMIB system in a reference frame rotating at the frequency $\eqm\omega$ at the two branches of synchronous steady states. The spectral abscissa
\begin{equation}
    a \coloneqq \max_{\lambda_i\in\operatorname{spec}(\matr A_{\mathrm{inf}})}
    \operatorname{Re}\lambda_i
\end{equation}
is computed, where \(\matr A_{\mathrm{inf}}\) is the corresponding linearization matrix. Hence \(a<0\) indicates
local exponential stability of the linearized SMIB model.

Fig.~\ref{fig:certificate_sweeps} shows one-parameter sweeps in
the stator resistance \(R\), stator inductance \(L\), and
steady-state real power \(\bar P\). Each column corresponds to one
swept parameter. The first row shows the optimized certificate
margin \(m\), the second row shows the spectral abscissa
\(a\), the third row shows the certifying optimizer
\(\hat\sigma\), and the fourth row shows $\mathcal K$.
The certificate-margin and optimizer curves are displayed only at
parameter values for which \(m>0\). Across the displayed sweeps, the certificate does not certify the unstable branch and identifies a nontrivial subset of the stable branch. Although multimachine simulations would provide additional illustration, they are not essential to substantiate the analytical result and are omitted owing to the space limitation.
\section{Conclusion} \label{sec_concl}

This paper introduced a shifted dissipativity condition for SGs tailored to the orbital-stability analysis of multimachine power networks. The condition guarantees local asymptotic stability in the SMIB setting and, together with a consensus-like rotor-angle feedback, local orbital stability of a multimachine power-network model that retains transmission-line dynamics. The approach addresses rotor-angle alignment by using an internal-model argument to represent the residual phase dynamics of the synchronous orbit. A potential application is the design of consensus feedback aimed at improving orbital-stability margins.

\section*{Acknowledgment}

The author used ChatGPT-5.6 Sol to revise the proofs of Lemmas~\ref{lem_equiv} and~\ref{lem:phase-dynamics} and to assist in creating the code for Fig.~\ref{fig:certificate_sweeps}.
The author independently
verified these results. The author is grateful to his doctoral committee members at the Pennsylvania State University and anonymous reviewers for previous incarnations of this work, for their valuable comments.
\appendix

\subsection{Proof of Lemma~\ref{lem_equiv}} \label{proof_equiv}
\color{black}

\begin{proof}
\emph{Step 1 (Definition~\ref{defn_2} $\Rightarrow$ Definition~\ref{defn_1}).} Let $b_0 \coloneqq \inf_{(\vect x_{\texttt e}, \varphi)\in U} S(\vect x_{\texttt e}, \varphi)$.
Then $S(\vect x_{\texttt e}(t), \varphi(t)) \geq b_0$ for all $t \geq 0$. This gives $\int_0^{t} s(\vect u_{\texttt e}(\tau), \vect y_{\texttt e}(\tau), \varphi(\tau))\, d \tau \geq S(\vect x_{\texttt e}(t), \varphi(t)) - S(\vect x_{\texttt e}(0), \varphi(0)) \geq b_0 - S(\vect x_{\texttt e}(0), \varphi(0))$. Hence \eqref{E:shifted_passivity} holds with $b(\vect x_{\texttt e}(0), \varphi(0)) \coloneqq S(\vect x_{\texttt e}(0), \varphi(0)) - b_0 \geq 0$. 

\emph{Step 2 (Definition~\ref{defn_2} $\Leftarrow$ Definition~\ref{defn_1}).} We prove this constructively by considering the candidate storage function 
\begin{align*}
    &S(\vect x_{\texttt e}(0), \varphi(0)) \coloneqq {} \notag \\
    & \sup_{\substack{t\in \mathbb{R}_{\geq 0}, \\ (\vect u_{\texttt e}(\cdot) - \eqm{\vect u}_{\texttt e}[\varphi(\cdot)]) \in \mathbf{L}^2_{\mathrm{loc}}([0,\, \infty);\, \mathbb{C}), \\ (\vect x_{\texttt e}(\tau), \varphi(\tau)) \in U, \, \tau \in [0,\, t]}} {-}\int_0^{t} s(\vect u_{\texttt e}(\tau), \vect y_{\texttt e}(\tau), \varphi(\tau))\, d \tau
\end{align*}
where the trajectories over which the supremum is taken are initialized at $(\vect x_{\texttt e}(0), \varphi(0))$ and remain in $U$. 
Note that \eqref{E:shifted_passivity} implies $-\int_0^{t} s(\vect u_{\texttt e}(\tau), \vect y_{\texttt e}(\tau), \varphi(\tau))\, d \tau \leq b(\vect x_{\texttt e}(0), \varphi(0))$. Hence $S(\vect x_{\texttt e}(0), \varphi(0)) \leq b(\vect x_{\texttt e}(0), \varphi(0))$ is finite.
Since $S(\vect x_{\texttt e}, \varphi)$ is defined as a supremum over $t \in \mathbb{R}_{\geq 0}$ and $t = 0$ is admissible, we have $S(\vect x_{\texttt e}, \varphi)\geq 0$. Finally, by the fundamental theorem of dynamic programming, $S(\vect x_{\texttt e}(0), \varphi(0)) \geq -\int_0^t s(\vect u_{\texttt e}(\tau), \vect y_{\texttt e}(\tau), \varphi(\tau))\, d \tau + S(\vect x_{\texttt e}(t), \varphi(t))$, which rearranges to \eqref{E:shifted_passivity2}.
\end{proof}

\subsection{Proof of Proposition~\ref{prop_ineq}} \label{proof_ineq}

\begin{proof}
\emph{Step 1.} We first cast  \eqref{E:swing}, \eqref{E:stator}, and $\dot\varphi = \eqm\omega$ as an almost port-Hamiltonian error system. We rewrite \eqref{E:swing} and \eqref{E:stator} as
\begin{subequations} \label{E:swing3} \begin{align}
    &\frac{d }{d t} (j J \omega\xi) 
    = -D j \omega\xi + j T_m \xi 
    + \frac{\lambda}{2} (I - I^*\xi^2) - J \eqm\omega^2 \xi, \\
    &\dot\xi = j\omega\xi \\
    &L \dot I = -R I - j \lambda\omega\xi + V.
\end{align} \end{subequations}
Equations for a steady-state trajectory $\eqm{\vect x}_{\texttt e}[\varphi]$ with $\dot\varphi = \eqm\omega$ are written similarly.
Let us change to the error coordinates
\begin{align*}
    &\vect z_{\texttt e} \coloneqq \begin{bmatrix}
        j J \omega \xi - j J \eqm\omega \eqm\xi[\varphi] &
        \xi - \eqm\xi[\varphi] &
        LI - L \eqm I[\varphi]
    \end{bmatrix}^\tran.
\end{align*}
Define
\begin{align*}
    \mathcal H_{\texttt e}(\vect z_{\texttt e}) &\coloneqq H_{\texttt e}(\vect x_{\texttt e}, \varphi) - \frac{\lambda^2}{2L} \|\xi - \eqm\xi[\varphi]\|^2 \\
    &\quad\,\, - \lambda \left\langle I - \eqm I[\varphi], \xi - \eqm\xi[\varphi] \right\rangle,
\end{align*}
which is a quadratic function of $\vect z_{\texttt e}$. 
We have
\begin{align}
    &\nabla \mathcal H_{\texttt e}(\vect z_{\texttt e}) = \begin{bmatrix}
        j\omega\xi - j\eqm\omega\eqm\xi[\varphi] \\[1pt]
        \lambda  \left\langle \eqm I[\varphi], \eqm\xi[\varphi] \right\rangle (\xi - \eqm\xi[\varphi]) \\[1pt]
        I - \eqm I[\varphi]
    \end{bmatrix} \eqqcolon \begin{bmatrix}
        g_1 \\
        g_2 \\
        g_3
    \end{bmatrix}. \label{E:gradient_H}
\end{align}
From \eqref{E:swing1}, the steady-state torque balance is
\begin{equation} \label{E:ss_swing}
    T_m = D\eqm\omega - \lambda \left\langle \eqm I[\varphi], j \eqm\xi[\varphi] \right\rangle.
\end{equation}
Using \eqref{E:ss_swing} and
\begin{equation*}
    -j \lambda \left\langle \eqm I[\varphi], j \eqm\xi[\varphi]\right\rangle + \lambda \left\langle \eqm I[\varphi], \eqm\xi[\varphi] \right\rangle = \lambda \eqm I[\varphi]^* \eqm\xi[\varphi],
\end{equation*}
we cast \eqref{E:swing3} and \eqref{E:gradient_H} into the almost port-Hamiltonian form:
\begin{align}
    &\dot{\vect z}_{\texttt e} = \matr F \nabla \mathcal H_{\texttt e}(\vect z_{\texttt e}) \notag \\
    &+ \begin{bmatrix}
        -J \omega^2 \xi + J \eqm\omega^2\eqm\xi[\varphi] + 
\left(j D\eqm\omega + \lambda \eqm I[\varphi]^* \eqm\xi[\varphi]\right) (\xi - \eqm\xi[\varphi]) \\
        0 \\
        V - \eqm V[\varphi]
    \end{bmatrix} \notag \\
    &- \frac{\lambda}{2} \begin{bmatrix}
        I - \eqm I + \xi^2 I^* - \eqm\xi[\varphi]^2 \eqm I[\varphi]^* \\
        0 \\
        0
    \end{bmatrix}, \label{E:PH}
\end{align}
where
\begin{equation}
    \matr F = \begin{bmatrix}
        -D &-1 &\lambda \\
        1 &0 &0 \\
        -\lambda &0 &-R
    \end{bmatrix} .
\end{equation}

\emph{Step 2.} Based on \eqref{E:PH}, we can write
\begin{align}
    \dot{\mathcal H}_{\texttt e}(\vect z_{\texttt e}) &= \left\langle \nabla \mathcal H_{\texttt e}(\vect z_{\texttt e}), \dot{\vect z}_{\texttt e} \right\rangle \notag \\
    &= \left\langle \nabla \mathcal H_{\texttt e}(\vect z_{\texttt e}), \matr F \nabla \mathcal H_{\texttt e}(\vect z_{\texttt e}) \right\rangle + \left\langle g_1, -J \omega^2 \xi + J \eqm\omega^2\eqm\xi[\varphi] \right\rangle \notag \\
    &\quad\, + \left\langle g_1, (j D\eqm\omega+ \lambda \eqm I[\varphi]^* \eqm\xi[\varphi]) (\xi - \eqm\xi[\varphi]) \right\rangle \notag \\
    &\quad\, + \left\langle g_3, V - \eqm V[\varphi] \right\rangle - \frac{\lambda}{2} \left\langle g_1, I-\eqm I[\varphi]\right\rangle \notag \\
    &\quad\, - \frac{\lambda}{2} \left\langle g_1, \xi^2 I^* - \eqm\xi[\varphi]^2 \eqm I[\varphi]^* \right\rangle. \label{E:step1}
\end{align}
We replace the last term in \eqref{E:step1} using the following equality (for brevity, $\eqm\xi = \eqm\xi[\varphi]$ and $\eqm I = \eqm I[\varphi]$),
\begin{align}
    &{-}\left\langle j\omega\xi - j\eqm\omega\eqm\xi, \xi^2 I^*- \eqm\xi^2 \eqm I^* \right\rangle 
    = \left\langle j\omega\xi - j\eqm\omega\eqm\xi, I - \eqm I\right\rangle \notag\\
    &\qquad - \left\langle j \eqm\omega \eqm\xi^* (\xi^2 -\eqm\xi^2), I - \eqm I \right\rangle - \left\langle j\omega\xi - j\eqm\omega\eqm\xi, \eqm I^* (\xi^2 - \eqm\xi^2) \right\rangle. \notag %\label{E:relation}
\end{align}
This gives, after substituting $g_1$, $g_2$, and $g_3$,
\begin{align}
    \dot{\mathcal H}_{\texttt e}(\vect z_{\texttt e}) 
    &= \left\langle \nabla \mathcal H_{\texttt e}(\vect z_{\texttt e}), \matr F \nabla \mathcal H_{\texttt e}(\vect z_{\texttt e}) \right\rangle \notag \\
    &\quad\, + \left\langle j \omega\xi - j \eqm\omega \eqm\xi[\varphi], -J \omega^2 \xi + J \eqm\omega^2\eqm\xi[\varphi] \right\rangle \notag \\
    &\quad\, + \left\langle j \omega \xi - j \eqm\omega \eqm\xi[\varphi], (j D\eqm\omega + \lambda \eqm I[\varphi]^* \eqm\xi[\varphi]) (\xi - \eqm\xi[\varphi]) \right\rangle \notag \\
    &\quad\, +\left\langle I - \eqm I[\varphi], V - \eqm V[\varphi] \right\rangle \notag \\
    &\quad\, - \frac{\lambda}{2} \left\langle j \eqm\omega \eqm\xi[\varphi]^* (\xi^2 - \eqm\xi[\varphi]^2), I - \eqm I[\varphi] \right\rangle \notag \\
    &\quad\, - \frac{\lambda}{2} \left\langle j \omega\xi - j\eqm\omega \eqm\xi[\varphi], \eqm I[\varphi]^* (\xi^2 - \eqm\xi[\varphi]^2) \right\rangle. \notag %\label{E:step2}
\end{align}
We expand the first term, simplify the second term, expand the fifth and sixth terms, and partially cancel the third and the sixth terms to obtain
\begin{align}
    \dot{\mathcal H}_{\texttt e}(\vect z_{\texttt e}) 
    &= -D \|j \omega\xi - j\eqm\omega \eqm\xi[\varphi] \|^2 - R \|I - \eqm I[\varphi]\|^2 \notag \\
    &\quad\, + \langle j \omega\xi - j \eqm\omega\eqm\xi[\varphi], J\eqm\omega \omega(\xi - \eqm\xi[\varphi]) \rangle \notag \\
    &\quad\, + \left\langle j \omega\xi - j \eqm\omega \eqm\xi[\varphi], j D\eqm\omega  (\xi - \eqm\xi[\varphi]) \right\rangle \notag \\
    &\quad\, + \left\langle I - \eqm I[\varphi], V - \eqm V[\varphi] \right\rangle \notag \\
    &\quad\, - \frac{\lambda}{2} \left\langle j \eqm\omega \eqm\xi[\varphi]^* (\xi + \eqm\xi[\varphi]) (\xi - \eqm\xi[\varphi]), I - \eqm I[\varphi] \right\rangle \notag \\
    &\quad\, - \frac{\lambda}{2} \left\langle j \omega\xi - j\eqm\omega \eqm\xi[\varphi], \eqm I[\varphi]^* (\xi - \eqm\xi[\varphi]) (\xi - \eqm\xi[\varphi]) \right\rangle. \label{E:last}
\end{align}
Moreover, from \eqref{E:swing3}, we have
\begin{align}
    &\quad\, \frac{d }{d  t} \left(\frac{\lambda^2}{2L} \|\xi - \eqm\xi[\varphi]\|^2 + \lambda \left\langle I - \eqm I[\varphi], \xi - \eqm\xi[\varphi] \right\rangle \right) \notag \\
    &= \lambda \left\langle j\omega \xi - j \eqm\omega\eqm\xi[\varphi], I - \eqm I[\varphi] \right\rangle - \frac{\lambda R}{L} \left\langle \xi - \eqm\xi[\varphi] , I - \eqm I[\varphi]\right\rangle \notag \\
    &\quad\, + \frac{\lambda}{L} \left\langle \xi - \eqm\xi[\varphi] , V - \eqm V[\varphi]\right\rangle. \label{E:additional}
\end{align}
Combining \eqref{E:last} and \eqref{E:additional} gives
\begin{align}
    \dot{H}_{\texttt e}(\vect x_{\texttt e}, \varphi)
    &= -D \|j \omega\xi - j\eqm\omega \eqm\xi[\varphi] \|^2 - R \|I - \eqm I[\varphi]\|^2 \notag \\
    &\quad\, + \langle j \omega\xi - j \eqm\omega\eqm\xi[\varphi], J\eqm\omega \omega(\xi - \eqm\xi[\varphi]) \rangle \notag \\
    &\quad\, + \left\langle j \omega\xi - j \eqm\omega \eqm\xi[\varphi], j D\eqm\omega  (\xi - \eqm\xi[\varphi]) \right\rangle \notag \\
    &\quad\, + \left\langle I - \eqm I[\varphi], V - \eqm V[\varphi] \right\rangle \notag \\
    &\quad\, - \frac{\lambda}{2} \left\langle j \eqm\omega \eqm\xi[\varphi]^* (\xi - \eqm\xi[\varphi]) (\xi - \eqm\xi[\varphi]), I - \eqm I[\varphi] \right\rangle \notag \\
    &\quad\, - \frac{\lambda}{2} \left\langle j \omega\xi - j\eqm\omega \eqm\xi[\varphi], \eqm I[\varphi]^* (\xi - \eqm\xi[\varphi]) (\xi - \eqm\xi[\varphi]) \right\rangle \notag \\
    &\quad\, + \lambda \left\langle j (\omega - \eqm\omega) \xi, I - \eqm I[\varphi] \right\rangle \notag \\
    &\quad\, - \frac{\lambda R}{L} \left\langle \xi - \eqm\xi[\varphi], I - \eqm I[\varphi] \right\rangle \notag \\
    &\quad\, + \frac{\lambda}{L} \left\langle \xi - \eqm\xi[\varphi] , V - \eqm V[\varphi]\right\rangle, \notag
\end{align}
which rearranges to \eqref{E:eb_raw_raw}.
\end{proof}

\subsection{Proof of Lemma \ref{lem:phase-dynamics}} \label{proof_phase_dynamics}

\begin{proof}
All barred (steady-state) quantities are evaluated at the reference phase
$\hat\varphi$.

\emph{Step 1 (Extracting the quadratic phase contribution).}
Using \eqref{E:aggregate} and the individual edge storage functions, we obtain
\begin{align*}
    &\quad\,\, \frac{\partial \widetilde H}{\partial \hat \varphi} (\vect x, \hat\varphi) \\
    &= -\sum_{\texttt e=1}^{\texttt g} \bigg[\sigma_{\texttt e} J_{\texttt e} \nu_{\texttt e} + \left(\sigma_{\texttt e} D_{\texttt e} + 2\mathcal K_{\texttt e} + \frac{\lambda_{\texttt e}^2}{L_{\texttt e}} \right)\hat\delta_{\texttt e} \\
    &\qquad\qquad\qquad \qquad\,  + \big\langle \hat\zeta_{\texttt e}, j \left(L_{\texttt e} \eqm I_{\texttt e} + \lambda_{\texttt e} \eqm\xi_{\texttt e} \right) \big\rangle \bigg] \\
    &\quad\, - \sum_{\texttt e=\texttt g+1}^{\texttt g+\texttt T} L_{\texttt e} \big\langle \hat\zeta_{\texttt e}, j \eqm I_{\texttt e} \big\rangle - \sum_{\texttt e=\texttt g+\texttt T+1}^{2\texttt g+\texttt T+\ell} C_{\texttt e}\left\langle \hat\eta_{\texttt e}, j \eqm V_{\texttt e} \right\rangle + O(\|\hat{\vect z}\|^2).
\end{align*}
Let
\[
\frac{\partial\widetilde H}{\partial\hat\varphi}
(\vect x, \hat\varphi)
=\mathcal Z(\hat{\vect z})+O(\|\hat{\vect z}\|^2),
\]
where $\mathcal Z(\hat{\vect z})$ is the linear part of $\frac{\partial\widetilde H}{\partial\hat\varphi}(\vect x, \hat\varphi)$.
We have
\[
\left[
\frac{\partial\widetilde H}{\partial\hat\varphi}
(\vect x, \hat\varphi)\right]^2
=\mathcal Z(\hat{\vect z})^2+O(\|\hat{\vect z}\|^3).
\]
Hence the negative of the quadratic part of \eqref{E:last_balance} is
\[
\mathcal P(c)
\coloneqq
\widetilde{\mathcal Q}(\hat{\vect z})
-\sum_{\texttt i=1}^{\texttt g}
\frac{\lambda_{\texttt i}}{L_{\texttt i}}\hat\delta_1
\big\langle\hat\eta_{\texttt i+\texttt g+\texttt T},j\bar\xi_{\texttt i} \big\rangle
+c\mathcal Z(\hat{\vect z})^2.
\]
It remains to determine when $\mathcal P(c)\succ0$.

\emph{Step 2 (Eliminate the non-angle SG variables).}
Define
\begin{align*}
&\vect z_{\texttt e}\coloneqq
\begin{bmatrix}
\nu_{\texttt e} &\hat\delta_{\texttt e} &
-\big\langle\hat\zeta_{\texttt e}, j\bar\xi_{\texttt e}\big\rangle
\end{bmatrix}^\tran, \quad \mathcal I_{\texttt e}\coloneqq\big\langle\hat\zeta_{\texttt e},\bar\xi_{\texttt e}\big\rangle.
\end{align*}
The generator dissipation and its contribution to $\mathcal Z$ are
\[
\vect z_{\texttt e}^\tran \matr M_{\texttt e}(\sigma_{\texttt e})\vect z_{\texttt e}+R_{\texttt e}\mathcal I_{\texttt e}^2,
\quad
\vect b_{\texttt e}^\tran \vect z_{\texttt e}+
L_{\texttt e}\langle\bar I_{\texttt e},j\bar\xi_{\texttt e}\rangle\mathcal I_{\texttt e}.
\]
Decompose
\begin{equation*}
    \vect z_{\texttt e} = \frac{\hat\delta_{\texttt e}}{d_{\texttt e}} \matr M_{\texttt e}(\sigma_{\texttt e})^{-1} \vect e_2 + \vect v_{\texttt e}
\end{equation*}
such that $\vect e_2^\tran \vect v_{\texttt e} = 0$.
Since $d_{\texttt e} = \vect e_2^\tran \matr M_{\texttt e}(\sigma_{\texttt e})^{-1} \vect e_2$, we have
\begin{equation*}
    \vect z_{\texttt e}^\tran \matr M_{\texttt e}(\sigma_{\texttt e}) \vect z_{\texttt e} = \frac{\hat\delta_{\texttt e}^2}{d_{\texttt e}} + \vect v_{\texttt e}^\tran \matr M_{\texttt e}(\sigma_{\texttt e}) \vect v_{\texttt e}.
\end{equation*}
Furthermore, 
\begin{equation*}
    \vect b_{\texttt e}^\tran \vect z_{\texttt e} = \frac{\vect e_2^\tran \matr M_{\texttt e}(\sigma_{\texttt e})^{-1} \vect b_{\texttt e}}{d_{\texttt e}} \hat\delta_{\texttt e} + \vect b_{\texttt e}^\tran \vect v_{\texttt e} = r_{\texttt e}\hat\delta_{\texttt e} + \vect b_{\texttt e}^\tran \vect v_{\texttt e}.
\end{equation*}
Thus, eliminating $\vect v_{\texttt e}$ leaves the angle-dissipation coefficient
\begin{equation*}\alpha_{\texttt e}\coloneqq 1/d_{\texttt e}
\end{equation*}
and changes the coefficient of $\hat\delta_{\texttt e}$ in $\mathcal Z$ to
\begin{equation*}
r_{\texttt e}\coloneqq n_{\texttt e}/d_{\texttt e}.
\end{equation*}
The squared dual norm of the remaining functional $\vect v_{\texttt e} \mapsto \vect b_{\texttt e}^\tran \vect v_{\texttt e}$, on the subspace $\vect e_2^\tran \vect v_{\texttt e} = 0$ is
\begin{align*}
    &\quad\,\, \vect b_{\texttt e}^\tran \left(\matr M_{\texttt e}(\sigma_{\texttt e})^{-1} - \frac{\matr M_{\texttt e}(\sigma_{\texttt e})^{-1} \vect e_2 \vect e_2^\tran \matr M_{\texttt e}(\sigma_{\texttt e})^{-1}}{d_{\texttt e}} \right) \vect b_{\texttt e} \\
    &= \vect b_{\texttt e}^\tran \matr M_{\texttt e}(\sigma_{\texttt e})^{-1} \vect b_{\texttt e} - \frac{n_{\texttt e}^2}{d_{\texttt e}}.
\end{align*}
The independent coordinate $\mathcal I_{\texttt e}$ has dissipation $R_{\texttt e}\mathcal I_{\texttt e}^2$ and coefficient $L_{\texttt e} \left\langle \eqm I_{\texttt e}, j \eqm\xi_{\texttt e} \right\rangle$ in $\mathcal Z$; eliminating it therefore contributes
\begin{equation*}
    \frac{L_{\texttt e}^2 \left\langle \eqm I_{\texttt e}, j \eqm\xi_{\texttt e} \right\rangle^2}{R_{\texttt e}}
\end{equation*}
to the squared dual norm. Similar contributions are obtained for the $R$--$L$ line currents.

\emph{Step 3 (Eliminate the shunt capacitor voltages).}
Define the shifted shunt-voltage coordinates
\begin{equation*}
    \tilde\eta_{\texttt i+\texttt g+\texttt T} \coloneqq \hat \eta_{\texttt i+\texttt g+\texttt T} - \frac{\lambda_{\texttt i}}{2L_{\texttt i} G_{\texttt i+\texttt g+\texttt T}} j \eqm\xi_{\texttt i} \hat\delta_1.
\end{equation*}
In terms of this coordinate, the dissipation becomes
\begin{align*}
&
\quad\,\, G_{\texttt i+\texttt g+\texttt T}\|\hat\eta_{\texttt i+\texttt g+\texttt T}\|^2
-\frac{\lambda_{\texttt i}}{L_{\texttt i}}\hat\delta_1
\langle\hat\eta_{\texttt i+\texttt g+\texttt T},j\bar\xi_{\texttt i}\rangle
\\
&=
G_{\texttt i+\texttt g+\texttt T}
\left\|
\tilde\eta_{\texttt i+\texttt g+\texttt T}
\right\|^2
-
\frac{(\lambda_{\texttt i}/L_{\texttt i})^2}
{4G_{\texttt i+\texttt g+\texttt T}}\hat\delta_1^2.
\end{align*}
Thus define
\[
\kappa\coloneqq
\sum_{\texttt i=1}^{\texttt g}
\frac{(\lambda_{\texttt i}/L_{\texttt i})^2}{4 G_{\texttt i+\texttt g+\texttt T}}.
\]
The same substitution in the shunt contribution to $\mathcal Z$ produces the additional angle term
\begin{equation*}
    -\frac{\lambda_{\texttt i} C_{\texttt i+\texttt +\texttt T}}{2L_{\texttt i} G_{\texttt i+\texttt g+\texttt T}} \left\langle \eqm\xi_{\texttt i}, \eqm V_{\texttt i+\texttt g+\texttt T} \right\rangle \hat\delta_1.
\end{equation*}
Summing over $\texttt i$ therefore changes the coefficient of $\hat\delta_1$ from $r_1$ to $r_1 - q$.

Collecting all remaining non-angle variables into $\vect w$ gives
\begin{align*}
\mathcal P(c)
&=
\sum_{\texttt e=1}^{\texttt g}\alpha_{\texttt e}\hat\delta_{\texttt e}^2
-\kappa\hat\delta_1^2+ \vect w^\tran \matr D \vect w
\\
&\quad\, +
c\left[
(r_1-q)\hat\delta_1
+\sum_{\texttt e=2}^{\texttt g} r_{\texttt e}\hat\delta_{\texttt e}
+\vect a^\tran \vect w
\right]^2,
\end{align*}
where $\matr D\succ0$ and
\[
\vect a^\tran \vect D^{-1} \vect a 
=
\sum_{\texttt e=1}^{\texttt g}
\left(
\vect b_{\texttt e}^\tran \matr M_{\texttt e}(\sigma_{\texttt e})^{-1} \vect b_{\texttt e}-\frac{n_{\texttt e}^2}{d_{\texttt e}}
\right)
+\tau.
\]

\emph{Step 4 (Eliminate $\vect w$ and
$\hat\delta_2,\ldots,\hat\delta_\textup{\texttt g}$).}
Note that, for $\matr D\succ0$,
\begin{equation} \label{E:woodbury}
\min_{\vect w}
\left\{
\vect w^\tran \matr D\vect w+c(\rho+\vect a^\tran \vect w)^2
\right\}
=
\frac{c}{1+c\,\vect a^\tran \matr D^{-1}\vect a}\rho^2,
\end{equation}
which is a consequence of the Woodbury formula~\cite{bernstein2009matrix}.

Define
\begin{equation*}
    \mathcal B(c) \coloneqq \frac{c}{1 + c \mathcal A},\quad \mathcal A \coloneqq \vect a^\tran \matr D^{-1} \vect a.
\end{equation*}
After eliminating $\vect w$, the remaining angle quadratic form is
\begin{equation*}
    (\alpha_1 - \kappa) \hat\delta_1^2 + \sum_{\texttt e=2}^{\texttt g} \alpha_{\texttt e} \hat\delta_{\texttt e}^2 + \mathcal B(c) \left[(r_1 - q) \hat\delta_1 + \sum_{\texttt e=2}^{\texttt g} r_{\texttt e} \hat\delta_{\texttt e} \right]^2.
\end{equation*}
A second application of \eqref{E:woodbury} yields
\begin{align*}
    &\quad\,\, \min_{\hat\delta_2,\ldots,\hat\delta_{\texttt g}} \left\{ \sum_{\texttt e=2}^{\texttt g} \alpha_{\texttt e} \hat\delta_{\texttt e}^2 + \mathcal B(c) \left[(r_1 - q) \hat\delta_1 + \sum_{\texttt e=2}^{\texttt g} r_{\texttt e} \hat\delta_{\texttt e} \right]^2\right\} \\
    &= \frac{\mathcal B(c)}{1 + \mathcal B(c) \sum_{\texttt e=2}^{\texttt g} r_{\texttt e}^2/\alpha_{\texttt e}} (r_1 - q)^2 \hat\delta_1^2.
\end{align*}
Since
\begin{equation*}
    r_{\texttt e}^2/\alpha_{\texttt e} = n_{\texttt e}^2/d_{\texttt e},
\end{equation*}
and
\begin{equation*}
    \mathcal A = \sum_{\texttt e=1}^{\texttt g} \left(\vect b_{\texttt e}^\tran \matr M_{\texttt e}(\sigma_{\texttt e})^{-1} \vect b_{\texttt e} - \frac{n_{\texttt e}^2}{d_{\texttt e}} \right) + \tau,
\end{equation*}
we have
\begin{equation*}
    \mathcal A + \sum_{\texttt e=2}^{\texttt g} \frac{n_{\texttt e}^2}{d_{\texttt e}} = \sum_{\texttt e=1}^{\texttt g} \vect b_{\texttt e}^\tran \matr M_{\texttt e}(\sigma_{\texttt e})^{-1} \vect b_{\texttt e} - \frac{n_1^2}{d_1} + \tau \eqqcolon \Theta.
\end{equation*}
Consequently, the final scalar Schur complement is
\begin{equation*}
    \mathcal F(c)
\coloneqq
\alpha_1-\kappa
+\frac{c}{1+c\Theta}(r_1-q)^2
\end{equation*}

\emph{Step 5 (Optimize over $c$).}
The projected $\matr M_1(\sigma_1)^{-1}$ term is nonnegative by
Cauchy--Schwarz, and the steady-state stator equations imply
$\tau>0$. Hence $\Theta>0$. Therefore $\mathcal F(c)$ is nondecreasing
and
\begin{align*}
\lim_{c\to\infty}\mathcal F(c)
&=
\alpha_1
-
\kappa
+
\frac{
\left(
r_1
-q
\right)^2
}{\Theta}.
\end{align*}
This is precisely the left-hand side of
\eqref{eq:phase-dynamics-condition}. Thus a finite $c>0$ satisfying
$\mathcal F(c)>0$ exists if and only if \eqref{eq:phase-dynamics-condition} holds.
Equivalently, the quadratic part of \eqref{E:last_balance} is
negative definite.
\end{proof}

%\balance

\bibliographystyle{IEEEtran} 
\bibliography{references}

\end{document}